\newcommand{\bra}[1]{\langle {#1} |}
\newcommand{\ket}[1]{| {#1} \rangle}
\newtheorem{lemma}{Lemma}
\newtheorem{definition}{Definition}
\newtheorem{theorem}{Theorem}
\begin{document}

\title{Network coding for distributed quantum computation over cluster and butterfly networks}

\author{Seiseki~Akibue~
        and~Mio~Murao
\thanks{S. Akibue is with Graduate School of Science, The University of Tokyo, Tokyo 113-0033 Japan and present affiliation of S. Akibue is NTT Communication Science Laboratories, NTT Corporation, Kanagawa 243-0198 Japan, e-mail: akibue.seiseki@lab.ntt.co.jp.}
\thanks{M. Murao is with Graduate School of Science, The University of Tokyo, Tokyo 113-0033 Japan and NanoQuine, The University of Tokyo, Tokyo 113-0033 Japan, email: murao@phys.s.u-tokyo.ac.jp.}
}

\date{\today}
%

\maketitle

\begin{abstract}
To apply {\it network coding for quantum computation}, we study the distributed implementation of unitary operations over all separated input and output nodes of quantum networks. We consider  networks where quantum communication between nodes is restricted to sending a qubit, but classical communication is unrestricted.  We analyze which $N$-qubit unitary operations are implementable over {\it cluster networks} by investigating transformations of a given cluster network into quantum circuits.  We show that any two-qubit unitary operation is implementable over the {\it butterfly network} and the {\it grail network},  which are fundamental primitive networks for classical network coding.  We also analyze probabilistic implementations of unitary operations over cluster networks.
\end{abstract}

\begin{IEEEkeywords}
Quantum computing, Network Coding, Quantum entanglement
\end{IEEEkeywords}

%
\IEEEpeerreviewmaketitle

\section{Introduction}
%
%
%
%
\IEEEPARstart{D}{istributed} quantum computation is computation over a network consisting of spatially separated quantum systems represented by nodes connected by mediating quantum systems represented by edges.   A serious problem for any kind of distributed computation is the \textit{bottleneck} problem  caused by the collision of communication pathways between the nodes.  The bottleneck problem worsens as the network grows. Thus it is important to consider how to optimize transmission protocols so that the amount of quantum communications is reduced.  

In classical network information theory, \textit{network coding}, which incorporates processing at each node in addition to routing, provides efficient transmission protocols that can resolve the bottleneck problem \cite{Ahlswede}.   As an example, consider a communication task over the {\it butterfly network} and the {\it grail network} presented in Fig.~\ref{fig:cgrail} that aims to transmit single bits $x$ and $y$ from $i_1$ to $o_2$ and $i_2$ to $o_1$ simultaneously via nodes $n_1$, $n_2$, $n_3$ and $n_4$. The directed edges denote transmission channels with $1$-bit capacity. One of the channels in each network (the channel from $n_1$ to $n_2$ for the butterfly network and either the channel from $n_1$ to $n_2$ or the channel from $n_3$ to $n_4$ for the grail network) exhibits the bottleneck without network coding shown in Fig.~\ref{fig:cgrail}.

\begin{figure}
 \begin{center}
  \includegraphics[height=.19\textheight]{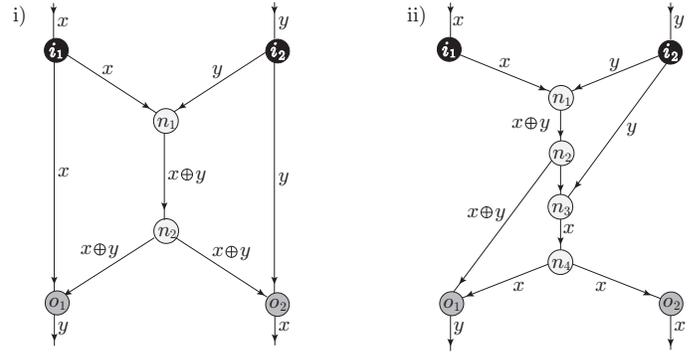}
  \end{center}
  \caption{Network coding for a classical communication task over i) the butterfly network and ii) the grail network. Two bits of information $x, y \in \{0, 1\}$ are given at the input nodes $i_1$ and $i_2$, respectively. $x \oplus y$ denotes addition of $x$ and $y$ modulus $2$.}
\label{fig:cgrail}
\end{figure}

{\it Quantum} communication with \textit{quantum network coding} has been studied by analogy to classical network coding \cite{Iwama, Leung, Hayashi,Kobayashi0, Kobayashi1, Kobayashi2}.   $k$-pair quantum communication over a network is a unicast communication task to faithfully transmit a $k$-qubit state given at distinct input nodes $\{i_1, i_2,\cdots,i_k\}$ to distinct output nodes $\{o_{1},o_{2},\cdots,o_k\}$ through a given network.  Two examples of $2$-pair quantum communication over a butterfly network and a grail network are shown in Fig.~\ref{fig:qgrail}.

In quantum mechanics, the no-cloning theorem forbids the creation of a perfect copy of an unknown state.  Thus perfect multicast communication of an unknown input state is impossible.    As copying states is a key element of classical network coding, classical network coding cannot be simply extended to $k$-pair quantum communication over the networks.   Indeed, in the setting where each edge can be used for either 1-bit classical communication or 1-qubit quantum communication, perfect quantum 2-pair communication over the butterfly network has been shown to be impossible \cite{Iwama, Leung}. However, it has been shown that if each edge can be used for either 2-bit classical communication or 1-qubit quantum communication, perfect quantum 2-pair communication over the butterfly network is possible, if and only if input nodes share two Bell pairs \cite{Hayashi}. 

 Further, if each edge has 1-qubit channel capacity and classical communication is freely allowed between any nodes, it has been shown that there exists a quantum network coding protocol to achieve the 2-pair quantum communication over the butterfly and grail networks perfectly \cite{Kobayashi0, Kobayashi1, Kobayashi2}.  This setting is justified in practical situations, where classical communication is much easier to implement than quantum communication.  Moreover in this setting, the correspondence between classical and quantum network coding has been discovered.   Namely,  it has been shown that $k$-pair quantum communication is possible over a network if  the corresponding $k$-pair classical communication is possible over the network using linear classical network coding schemes \cite{Kobayashi0, Kobayashi1} or even using nonlinear schemes \cite{Kobayashi2}. A connection between linear classical network coding schemes and measurement-based quantum computation has been investigated in \cite{Roetteler}.  However, it has been an open problem to determine the possibility of $k$-pair quantum communication over networks where the corresponding $k$-pair classical communication is impossible.

\begin{figure}
 \begin{center}
  \includegraphics[height=.16\textheight]{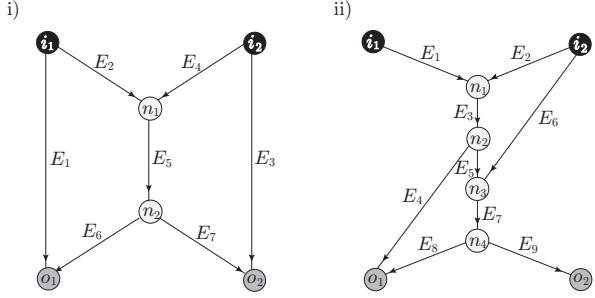}
  \end{center}
  \caption{i) The butterfly network and ii) the grail network with the input nodes ($i_1$ and $i_2$), output nodes ($o_1$ and $o_2$) and the repeater nodes ($n_1,n_2,n_3$ and $n_4$). The directed edges $E_1, E_2,\cdots,E_9$ represent quantum or classical channels. Quantum channels have 1-qubit capacity.  There are several settings of classical channel capacity for each edge.   Our task is to transmit a given two-qubit state $|{\rm input}\rangle_{i_1,i_2}$ from $i_1$ to $o_2$ and from $i_2$ to $o_1$ simultaneously by using the channels and local quantum operations at each nodes under the setting where classical communication is freely allowed between edges.}
\label{fig:qgrail}
\end{figure}

In $k$-pair quantum communication, the output state $\left \vert {\rm output}\right \rangle_{o_{1}\cdots o_k}$ can be regarded as a state obtained by performing a $k$-qubit unitary operation $U$ on the input state  $\left \vert {\rm input}\right \rangle_{i_1\cdots i_k}$
\begin{equation}
\label{eq:quantumcoding}
\left \vert {\rm output}\right \rangle_{o_1\cdots o_k} = U \left \vert {\rm input}\right \rangle_{i_1\cdots i_k},
\end{equation}
where $U$ is a permutation operation. 
We do not need to restrict the $k$-qubit unitary operation $U$ in Eq.\eqref{eq:quantumcoding} to be a permutation operation, it can be a general quantum operation.   This leads to the idea of \textit{network coding for quantum computation}, which aims to perform a quantum operation on a state given at distinct input nodes and to faithfully transmit the resulting state to the distinct output nodes efficiently over the network at the same time. By computing and communicating simultaneously, quantum computation over the network may reduce communication resources in the distributed quantum computation scenario.   The study of network coding for computation is still in its infancy for classical and quantum cases.   Network coding for classical computation is considered in  \cite{NetworkCodingforComputation} and network coding for quantum computation over the butterfly network is considered in \cite{Soeda}, both in 2011.

In this paper,  we investigate a {\it cluster network}, which is a special class of network with $k$ input nodes and $k$ output nodes, as a first step to apply network coding for more general quantum computation. The cluster network contains the grail network as its special case.  We focus on the setting considered in \cite{Kobayashi0, Kobayashi1, Kobayashi2}, where classical communication is freely allowed between any two nodes.   We identify the class of unitary operations that can be implemented over cluster networks in this setting by investigating transformations of cluster networks into quantum circuits implementable by using quantum communication resources between nodes specified by the cluster network.   The transformation method of cluster networks is also applicable to the butterfly network.   It provides constructions of quantum network coding for implementing any two-qubit unitary operations over the grail and butterfly networks, which are  the fundamental primitive networks for classical network coding.  We also analyze probabilistic implementation of $N$-qubit unitary operations over the cluster network to understand the properties of quantum network coding for quantum computation when the requirement of deterministic implementations are relaxed but that of exact implementations are kept.  

The rest of this paper is organized as follows. In Section II, we review necessary background information. In Section III, we define the cluster network and the implementability of a unitary operation over a quantum network.  In Section IV, we present a method to convert a given cluster network into quantum circuits describing unitary operations that are implementable over the network.  In Section V, we show that any two-qubit unitary operations is implementable over butterfly and grail networks.   In Section VI, we investigate the condition for unitary operations to be implementable over a given cluster network and show that our conversion method presented in Section IV gives all implementable unitary operations over the cluster networks with 2 and 3 input nodes.  Since the condition for unitary operations presented in Section VI is not based on classical network coding schemes but based on properties of quantum operations, it can be used for analyzing $k$-pair quantum communication over networks where corresponding $k$-pair classical communication is impossible.   In Section VII, we investigate probabilistic implementation of unitary operations over the cluster network and show that it is impossible to achieve $2$-pair quantum communication even probabilistically over a square shaped four-node quantum network, where corresponding $2$-pair classical communication is impossible.  A conclusion is given in Section VIII.

\section{Preliminaries}
\subsection{Notations}
The following notation will be used throughout this paper.

\begin{table}[htb]
\begin{tabular}{ll}
$\overline{a}$&
The complex conjugate of $a$.\\
$a^T$&
The transpose of $a$.\\
$a^{\dag}$&
The conjugate transpose of $a$.\\
$\mathbf{L}(\mathcal{H})$&
The set of linear operators \\&acting on the Hilbert space $\mathcal{H}$. \\
$\mathbb{I}_A$&
The identity operator on $\mathcal{H}_A$. \\
$\mathbf{U}(\mathcal{H})$&
The set of unitary operators.\\
$\mathbf{U}_c$&
The set of unitary operators \\&locally unitarily equivalent to\\& a two-qubit controlled unitary operation.\\
$\rm{tr}$&
The trace of a linear operator.\\
$\mathbf{L}(\mathcal{H}_A:\mathcal{H}_B)$&
The set of linear operators. \\&$\mathbf{L}:\mathcal{H}_A\rightarrow\mathcal{H}_B$.\\
$\mathbf{U}(\mathcal{H}_A:\mathcal{H}_B)$&
The set of isometry operators.\\
$\textsc{Sch\#}_B^A(\ket{\psi})$&
The Schmidt rank of $\ket{\psi}\in\mathcal{H}_A\otimes\mathcal{H}_B$.\\
$\textsc{Op\#}_B^A(M)$&
The operator Schmidt rank of \\&$M\in\mathbf{L}(\mathcal{H}_A\otimes\mathcal{H}_B)$.\\
$\textsc{KC\#}(U)$&
The Kraus-Cirac number of \\&a two qubit unitary operator.\\
\end{tabular}
\end{table}

\subsection{The Schmidt decomposition and rank}
For any vector $\ket{\psi}_{AB}$ in a Hilbert space $\mathcal{H}_A\otimes\mathcal{H}_B$, there exist a set of orthonormal vectors $\{\ket{i}_{A}\in\mathcal{H}_A\}_i$ and a set of orthonormal vectors $\{\ket{i}_{B}\in\mathcal{H}_B\}_i$ such that
\begin{equation}
\ket{\psi}_{AB}=\sum_i \lambda_i \ket{i}_{A}\ket{i}_{B},
\label{SchmidtDecomposition}
\end{equation}
where $\{\lambda_i\}_i$ are non-negative real numbers referred to as {\it Schmidt coefficients}. The decomposition of a vector $\ket{\psi}_{AB}$ given in the form of Eq.~(\ref{SchmidtDecomposition}) is referred to as a {\it Schmidt decomposition} of $\ket{\psi}_{AB}$.    Each Schmidt coefficient is equivalent to the square root of an eigenvalue of the reduced density operator $\rho_A$ on $\mathcal{H}_A$ of $\ket{\psi}_{AB}$ given by
\begin{equation}
\rho_A={\rm tr}_B\left(\ket{\psi}\bra{\psi}_{AB}\right),
\end{equation}
where ${\rm tr}_B(X)=\sum_i\bra{b_i}_B X\ket{b_i}_B$ is a partial trace of $X\in\mathbf{L}(\mathcal{H}_A\otimes\mathcal{H}_B)$ and $\{\ket{b_i}_B\in\mathcal{H}_B\}$ is an orthonormal basis.  The Schmidt decomposition is uniquely determined up to arbitrary choices of the orthonormal vectors in the subspaces corresponding to degenerate Schmidt coefficients.

The number of non-zero coefficients $|\{\lambda_i> 0\}|$ is called as the {\it Schmidt rank} of $\ket{\psi}_{AB}$ .  The Schmidt rank of $\ket{\psi}_{AB}$ is denoted by $\textsc{Sch\#}_B^A(\ket{\psi}_{AB})=|\{\lambda_i> 0\}|$ in this paper.   For a vector $\ket{\psi}_{ABC}$ in a multiple Hilbert space $\mathcal{H}_A\otimes\mathcal{H}_B\otimes\mathcal{H}_C$, the Schmidt decomposition of $\ket{\psi}_{ABC}$  in terms of a bipartite devision between the Hilbert spaces $\mathcal{H}_A\otimes \mathcal{H}_B$ and $\mathcal{H}_C$ can be similarly defined by introducing a set of orthonormal vectors in each devision  and the corresponding Schmidt rank of $\ket{\psi}_{ABC}$ is denoted by $\textsc{Sch\#}_C^{AB}(\ket{\psi}_{ABC})$.

\subsection{The operator Schmidt decomposition and rank}
The {\it operator Schmidt decomposition} can be applied to any linear operators acting on a Hilbert space $\mathcal{H}_A\otimes\mathcal{H}_B$. The set of linear operators $\mathbf{L}(\mathcal{H}_X)$ forms a Hilbert space with respect to the inner product $(M,N)=\frac{1}{\dim(\mathcal{H}_X)}{\rm tr}(M^{\dag}N)$. Thus we can apply the Schmidt decomposition to operators, such that for any linear operator $M\in\mathbf{L}(\mathcal{H}_A\otimes \mathcal{H}_B)$, there exists a set of orthonormal operators $\{P_i\in\mathbf{L}(\mathcal{H}_A)\}_i$ and  $\{Q_i\in\mathbf{L}(\mathcal{H}_B)\}_i$ satisfying
\begin{equation}
M=\sum_i \lambda_i P_i\otimes Q_i,
\label{OperatorSchmidtDecomposition}
\end{equation}
where $\{\lambda_i\}$ are non-negative real numbers referred to as {\it operator Schmidt coefficients} \cite{SchmidtDecomp} of $M$.   The decomposition of a linear operator $M$ given in the form of Eq.~(\ref{OperatorSchmidtDecomposition}) is referred to as the {\it operator Schmidt decomposition} of $M$.   The number of non-zero coefficients $|\{\lambda_i> 0\}|$ is referred to as the {\it operator Schmidt rank} of $M$. In this paper, we denote the operator Schmidt rank of $M$ by $\textsc{Op\#}_B^A(M)$.

\subsection{The Kraus-Cirac decomposition and rank}

A general two-qubit unitary operation $U \in \mathbf{U}(\mathcal{H}_A\otimes\mathcal{H}_B)$ where $\mathcal{H}_A= \mathcal{H}_B= \mathbb{C}^2$  can be decomposed into a canonical form called the Kraus-Cirac decomposition introduced in \cite{KCdec1,KCdec2, KCdec3} given by
\begin{equation}
U=(u \otimes u^\prime) e^{ i(x X \otimes X+y Y \otimes Y+z Z \otimes Z)} (w \otimes w^\prime),
\label{KCD}
\end{equation}
where $u$, $u^\prime$, $w$ and $w^\prime$ are single-qubit unitary operations and $X$, $Y$ and $Z$ are the Pauli operators on $\mathbb{C}^2$ and $x,y,z \in \mathbb{R}$.    In particular, the two-qubit global unitary part $U_{global} (x,y,z) $ of $U$ is defined by
\begin{eqnarray}
U_{global} (x,y,z) := e^{ i(x X \otimes X+y Y \otimes Y+z Z \otimes Z)}.
\label{gloablpartofKCD}
\end{eqnarray}
In Eq.~(\ref{gloablpartofKCD}), the parameters $x,y,z$ in $0 \leq x < \pi/2$ (or $0 \leq x \leq \pi/4$ if $z=0$), $0 \leq y \leq \min\{x, \pi/2 - x \}$ and $0 \leq z \leq y $ cover all two-qubit global unitary operations up to the local unitarily equivalence (the Weyl chamber \cite{KCdec3}).

The Kraus-Cirac number of a two-qubit unitary operation $U$ is defined as the number of non-zero parameters $x,y,z$ in $U_{global}(x,y,z)$ and is denoted by \textsc{KC\#}$(U)$ in this paper.  \textsc{KC\#}$(U)$ characterizes nonlocal properties (globalness) of $U$ \cite{SoedaAkibue}.   The following is the list of classifications of two-qubit unitary operations:   
\begin{itemize}
\item $U$ with \textsc{KC\#}$(U)=0$ is a product of local unitary operations and satisfies $\textsc{Op\#}^A_B(U)=1$.
\item $U$ with  \textsc{KC\#}$(U)=1$ is locally unitarily equivalent to a controlled unitary operation and satisfies $\textsc{Op\#}^A_B(U)=2$.
\item $U$ with  \textsc{KC\#}$(U)=2$ is locally unitarily equivalent to a special class of two-qubit unitary operations called a matchgate \cite{Valiant,TerhalDiVincenzo, JozsaMiyake} and satisfies $\textsc{Op\#}^A_B(U)=4$
\item The rest of two-qubit unitary operations including {\it a SWAP operation} have \textsc{KC\#}$(U)=3$ and satisfies $\textsc{Op\#}^A_B(U)=4$.
\end{itemize}

\section{Cluster networks}
We denote the Hilbert space of a set of qubits specified by a set $\mathcal{Q}$  by $\mathcal{H}_{\mathcal{Q}}$  and the Hilbert space corresponding to a qubit $Q_k$ specified by an index $k$ by $\mathcal{H}_{Q_k} =\mathbb{C}^2$.   In our setting where quantum communications are restricted but classical communications are unrestricted, quantum communication of a qubit state between two nodes is replaced by quantum teleportation \cite{teleportation} between two nodes.  Since any direction of classical communications is allowed, quantum communication of a single qubit state can be achieved by sharing a maximally entangled two-qubit state between the nodes and the direction of quantum communication is not limited.  Thus quantum network coding is equivalent to perform \textit{local operations} (at each nodes) \textit{and classical communication} (LOCC) assisted by the {\it resource state} that consists of a set of maximally entangled two-qubit states (the Bell pairs) $\ket{\Phi^{+}}=\frac{1}{\sqrt{2}}(\ket{00}+\ket{11})$ shared between the nodes connected by edges.  

We investigate which unitary operations are implementable by LOCC assisted by the resource state for a given network where nodes are represented by a two-dimensional lattice.  We consider that a node represented by $v_{i,j}$ is on the coordinate of the two-dimensional lattice $(i,j)$ and edges connect nearest neighbor nodes. We call these networks {\it cluster networks}. We first give a formal definition of a cluster network.

\begin{definition}
A  network $G=\{\mathcal{V},\mathcal{E},\mathcal{I},\mathcal{O} \}$ is a $(k,N)$-cluster network if and only if, 
\begin{eqnarray}
\mathcal{V}&=&\{v_{i,j};\,1\leq i\leq k,1\leq j\leq N\}\nonumber\\
\mathcal{I}&=&\{v_{i,1};\,1\leq i\leq k\}\nonumber\\
\mathcal{O}&=&\{v_{i,N};\,1\leq i\leq k\}\nonumber\\
\mathcal{E}&=&\mathcal{S} \cup \mathcal{K}
\end{eqnarray}
where
\begin{eqnarray}
\mathcal{S}= \{(v_{i,j},v_{i+1,j});\,1\leq i\leq k-1,1\leq j\leq N)\}, \nonumber\\
\mathcal{K}= \{(v_{i,j},v_{i,j+1});\,1\leq i\leq k,1\leq j\leq N-1)\}, 
\end{eqnarray}
$k \geq 1$ and  $N \geq 1$.  $\mathcal{V}$ represents the set of all nodes, $\mathcal{I}$ and $\mathcal{O}$ represent $k$ input nodes and $k$ output nodes, respectively.  $\mathcal{E}$ represents the set of all edges and $\mathcal{S}$ and $\mathcal{K}$ represent the sets of vertical and horizontal edges, respectively.
\end{definition}

Next we define the resource state corresponding to the $(k,N)$-cluster network.  We introduce qubits  $S_{i,j}^1$ at node $v_{i,j}$ and $S_{i+1,j}^2$ at node $v_{i+1,j}$ to represent a Bell pair corresponding to an edge $ (v_{i,j},v_{i+1,j}) \in \mathcal{S}$.  Similarly, we introduce qubits specified by $K_{i,j}^1$ at node $v_{i,j}$ and $K_{i,j+1}^2$ at node $v_{i,j+1}$ to represent a Bell pair corresponding to an edge $ (v_{i,j},v_{i,j+1}) \in \mathcal{K}$.   We denote the set of all qubits in the resource state by $\mathcal{R}=\{ S_{i,j}^1, S_{i+1,j}^2|1\leq i \leq k-1, 1\leq j\leq N\}\cup\{K_{i,j}^1, K_{i,j+1}^2| 1\leq i \leq k, 1\leq j\leq N-1 \}$. The resource state $\ket{\Phi}_{\mathcal{R}}$ corresponding to a cluster network is defined by the following.
\begin{definition}
For a given $(k, N)$-cluster network, the resource state $\ket{\Phi}_{\mathcal{R}}\in\mathcal{H}_{\mathcal{R}}$ is defined by
\begin{eqnarray}
\ket{\Phi}_{\mathcal{R}}&=&\otimes_{i=1}^{k-1}\otimes_{j=1}^{N}\ket{\Phi^+}_{S_{i,j}^1,S_{i+1,j}^2}\nonumber\\
&&\otimes_{i=1}^{k}\otimes_{j=1}^{N-1}\ket{\Phi^+}_{K_{i,j}^1,K_{i,j+1}^2}.
\label{resourcestate}
\end{eqnarray}
\end{definition}

For example, the $(3,3)$-cluster network and the corresponding resource state are shown in Fig.~\ref{fig:cluster1}.   Note that the resource state for a cluster network represented by Eq.~(\ref{resourcestate}) is related to but different from the cluster states used in measurement-based quantum computation \cite{MBQC}. While we can convert  the resource state for a cluster network into a cluster state by applying a projective measurement on all qubits at each node, a cluster state cannot be converted to the resource state for the corresponding cluster network by LOCC.  The resource states for a cluster network is also closely related to a valence bond solid state \cite{VBS} introduced in condensed matter physics through the projected entangled pair states (PEPS) \cite{PEPS} representation for the valence bond solid states.  The resource states for cluster networks are equivalent to a special type of resource states consisting of Bell pairs used for representing 2D (square lattice) PEPS.   PEPS can be represented as states probabilistically obtainable by independently performing a linear transformation on each node on a resource state.   In contrast, conditional operations at each node depending on the outcomes of measurements in other nodes are performed in our network coding scheme for quantum computation. 

\begin{figure}
\begin{center}
  \includegraphics[height=.42\textheight]{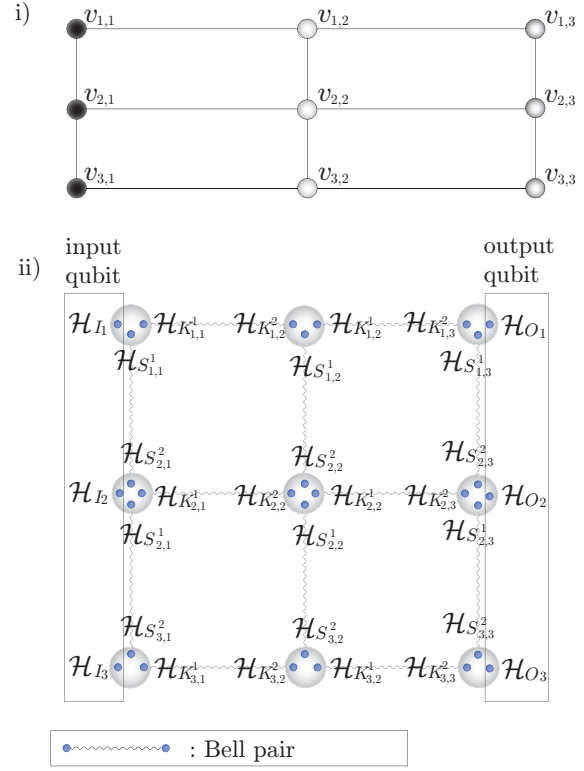}
  \end{center}
\caption{ i) The $(3,3)$-cluster network with the input nodes $\mathcal{I}=\{ v_{1,1},v_{2,1},v_{3,1} \}$, output nodes $\mathcal{O} = \{ v_{1,3},v_{2,3},v_{3,3} \}$ and $3$ repeater nodes $\{ v_{1,2},v_{2,2},v_{3,2} \}$, and ii) the corresponding resource state. Note that the resource states of the cluster networks are different from the cluster states used in measurement-based quantum computation \cite{MBQC}.
}
\label{fig:cluster1}
\end{figure}

Finally we define the implementability of a unitary operation over a $k$-pair network. In addition to resource qubits $\mathcal{R}$,
we introduce input qubits $I_i$ at the input node $v_{i,1} \in \mathcal{I}$, output qubits $O_i$ at the output node $v_{i,N} \in \mathcal{O}$, a set of input qubits $\mathcal{I}_Q =\{I_i|1\leq i\leq k\}$ and a set of output qubits $\mathcal{O}_Q =\{O_i|1\leq i\leq k\}$ for a $(k,N)$-cluster network.
 Note that each input and output node has only one input or output qubit since we concentrate on the implementability of a unitary operation, and the state of output qubits is initially set to be in $\ket{0}\in\mathcal{H}_{\mathcal{O}_Q}$.

\begin{definition}
For a $(k,N)$-cluster network specified by $G=\{\mathcal{V},\mathcal{E},\mathcal{I},\mathcal{O} \}$, a unitary operation $U \in\mathbf{U}(\mathcal{H}_{\mathcal{I}_Q}:\mathcal{H}_{\mathcal{O}_Q})$ is deterministically implementable over the network if and only if there exists a LOCC map $\Gamma$ such that for any pure state $\ket{\psi}\in \mathcal{H}_{ \mathcal{I}_Q}$,
\begin{equation}
\Gamma(\ket{\psi}\bra{\psi}\otimes\ket{\Phi}\bra{\Phi}_{\mathcal{R}})=U\ket{\psi}\bra{\psi}U^{\dag},
\end{equation}
where LOCC map $\Gamma$ consists of local operations on each node and classical communications and $\mathbf{U}(\mathcal{H}_{\mathcal{I}_Q}:\mathcal{H}_{\mathcal{O}_Q})$ is the set of unitary operations from $\mathcal{H}_{\mathcal{I}_Q}$ to $\mathcal{H}_{\mathcal{O}_Q}$.
\end{definition}
Note that the main difference between this network computation model for implementing a unitary operation over a cluster network and standard measurement-based quantum computation is that any operations inside each node are allowed including adding arbitrary ancilla states in this model whereas only projective measurements on the cluster state in each node are allowed in measurement-based quantum computation.  Thus the full set of implementable unitary operations over a ($k,N$)-cluster network is larger than or equal to a set of operations implementable by measurement-based quantum computation using the corresponding cluster states converted from the resource state for the ($k,N$)-cluster network by LOCC.

\section{Conversion of a cluster network into quantum circuits}
We propose a method to convert a $(k,N)$-cluster network into quantum circuits representing a class of unitary operations implementable by LOCC assisted by the resource state corresponding to a given cluster network. By using the converted circuit, it is  easier to construct a network coding protocol since a set of implementable unitary operations are represented by a set of parameters of the converted circuit instead of a complicated LOCC protocol. The class of implementable unitary operations represented by the converted circuit is a subset of that over the cluster network in general since this particular conversion method does not guarantee to give all possible constructions.  However, in some cases,  the constructions given by the conversion methods cover all possible implementable unitary operations as will be shown in Section V.   

We define a set of vertically aligned nodes $\mathcal{V}^v_j := \{ v_{i,j} \}_{i=1}^{k}$ and a set of vertically aligned edges  $\mathcal{S}_j:=\{ (v_{i,j},v_{i+1,j}) \}_{i=1}^{k-1}$ where $1 \leq j \leq N$.   We also define a set of horizontally aligned nodes $\mathcal{V}^h_i := \{ v_{i,j} \}_{j=1}^N$ and a set of horizontally aligned edges  $\mathcal{K}_i:=\{ (v_{i,j},v_{i,j+1}) \}_{j=1}^{N-1}$ where $1 \leq i \leq k$.  We consider that the Bell pairs given for a set of vertically aligned edges  $\mathcal{S}_j$ are used for implementing global unitary operations between nodes whereas each Bell pair given for a set of horizontal aligned edges  $\mathcal{K}_i$ is used for teleporting a qubit state from node $v_{i,j}$ to node $v_{i,j+1}$.

We show that three types of unitary operations, a two-qubit controlled unitary operation, a three-qubit fully controlled unitary operation and a single qubit unitary operation, are implementable between the nodes  in $\mathcal{V}^v_j$ if only one Bell pair is given for each edge and LOCC between the nodes is allowed.  Details of a LOCC protocol implementing three-qubit fully controlled unitary operations are given in the next subsection. A LOCC protocol implementing two-qubit controlled unitary operations has been proposed by \cite{Eisert} and also obtained by simply applying the protocol implementing three-qubit fully controlled unitary operations, as the two-qubit control unitary operations are special cases of three-qubit fully controlled unitary operations.

\begin{itemize}
\item A two-qubit controlled unitary operation:
 A two-qubit controlled unitary operation is defined by
\begin{eqnarray}
C_{ l;n} (\{ u_n^{(a)} \}_{a=0,1} ):=\sum_{a=0}^1 \ket{a}\bra{a}_l \otimes u_n^{(a)},
\end{eqnarray}
where $l$ represents the vertical coordinate of the node $v_{l,j}$ of a control qubit and $n$ represents the vertical coordinate of the node $v_{n,j}$ of a target qubit, and $u_n^{(a)} (a=0,1)$ are single qubit unitary operations on the target qubit.
If $n \neq l \pm 1$,  all Bell pairs represented by edges between  $l$ and $n$ are consumed to implement the two-qubit controlled unitary operation.  When we do not specify the single qubit unitary operations $\{ u_n^{(a)} \}$ on the target qubit we denote a two-qubit
controlled unitary operation simply by $C_{ l;n}$.  In particular, if $u_n^{(0)}=\mathbb{I}_n$ and $u_n^{(1)}=X$,  $C_{ l;n} (\{ u_n^{(a)} \}_{a=0,1} )$ is called as a controlled-NOT operation. 

\item A three-qubit fully controlled unitary operation:
In addition to the two-qubit control unitary operations, we can perform  three-qubit fully controlled unitary operations defined by 
\begin{eqnarray}
C_{l,m;n} (\{ u_n^{(a b)} \}_{a,b=0,1} ):=\sum_{a=0}^1 \sum_{b=0}^1 \ket{a b}\bra{a b}_{lm} \otimes u_n^{(a b)},
\end{eqnarray}
where $l$ and $m$ represent the vertical coordinates of the nodes $v_{l,j}$ and $v_{m,j}$ of two control qubits, respectively,  and $n$  represents the vertical coordinates of the node $v_{n,j}$ of a target qubit, and $u_n^{(a b)} (a,b=0,1)$
represents single qubit operations on the target qubit.
(See the next subsection for details of the LOCC protocol implementing three-qubit fully controlled unitary operations.)
Note that the indices $l$, $n$ and $m$ should be taken such that $l<n<m$ or $m<n<l$. Similarly to the case of a two-qubit controlled unitary operation, we denote a three-qubit fully controlled unitary operation by $C_{ l,m;n} $ when we do not specify the target single qubit operations.  On the other hand, any four-qubit fully controlled unitary, where three of the four qubits are control qubits and the rest of the qubit is a target qubit, is not implementable on  qubits that are all in different nodes of $\mathcal{V}^v_j$ in a $(k,N)$-cluster network, if a single Bell pair is given for each edge in $\mathcal{S}_j$.  

\item A single qubit unitary operation:
Obviously any single qubit unitary operations  can be implemented on any qubit.
\end{itemize}

Note that a general three-qubit fully controlled unitary operation is not implementable by using a sequence of two two-qubit controlled unitary operations that is implementable by using vertically aligned Bell pairs in general. This indicates that the use of three-qubit fully controlled unitary operations enhances the implementability of converted circuits.  A three-qubit fully controlled unitary operation plays an essential role in our network coding protocol over the butterfly network as shown in the next section.

\subsection{A LOCC protocol for implementing three-qubit fully controlled unitary operations}
We show a construction of a protocol to implement a three-qubit fully controlled unitary operation $C_{l,m;n}$ on qubits located at a set of vertically aligned nodes $\mathcal{V}_j^v$ over the $(k,N)$-cluster networks, where $l$ and $m$ represent two control qubits at nodes $v_{l,j}$ and $v_{m,j}$ respectively, and $n$ represents a target qubit at node $v_{n,j}$.    We present a LOCC protocol to implement $C_{l,m;n}$ assisted by the resource states consisting of the Bell pairs corresponding to the vertical edges $\mathcal{S}_j$ of the $(k,N)$-cluster networks.  

We consider to implement $C_{l,m;n}$ on a state of three qubits indexed by $Q_l$, $Q_m$ and $Q_n$ at node $v_{l,j}$, $v_{m,j}$ and $v_{n,j}$, respectively, and its explicit form is given by
\begin{eqnarray}
C_{l,m;n} (\{ u_n^{(ab)} \}):=\sum_{a=0}^1 \sum_{b=0}^1 \ket{ab}\bra{ab}_{lm} \otimes u_n^{(ab)}
\end{eqnarray}
where $\{ \ket{ab} \}_{a,b=\{ 0,1 \} }$ is the two-qubit computational basis of $\mathcal{H}_{Q_l}  \otimes \mathcal{H}_{Q_m}$ of the two controlled qubits and $u_n^{(ab)}$ acts on $\mathcal{H}_{Q_n}$ of the target qubit.  

To show how our LOCC protocol works, we consider an arbitrary state of the control qubits by 
$
\sum \lambda_{ab}\ket{ab}_{lm} \in \mathcal{H}_{Q_l}  \otimes \mathcal{H}_{Q_m} 
$
where
$\{ \lambda_{ab} \}$ is a set of arbitrary complex coefficients satisfying the normalization condition $\sum_{a,b} | \lambda_{ab} |^2 = 1$ and we represent an arbitrary state of the target qubit by $\ket{\phi} \in \mathcal{H}_{Q_n}$.  In the following, we show that our protocol transforms the joint state of controlled qubits and a target qubit to 
\begin{equation}
C_{l,m;n}  \sum_{a,b} \lambda_{ab}\ket{ab}_{lm} \ket{\phi}_n = \sum_{a,b}\lambda_{ab}\ket{ab}_{lm}u_n^{(ab)}\ket{\phi}_n. \nonumber
\end{equation}

The protocol for implementing three qubit fully controlled unitary operations (see Fig.~\ref{fig:3CUcircuit}) is specified as follows:
\begin{enumerate}
\item  Ancillary qubits indexed by $Q_{l^\prime}$, $Q_{m^\prime}$ are introduced at nodes  $v_{l,j}$ and $v_{m,j}$ respectively.  Set both of the ancillary qubits to be in $\ket{0}$.   Each of the two states of control qubits $Q_l$ and $Q_m$ is transformed to a two-qubit state by applying a controlled-NOT operation on the control qubit and the ancillary qubit at the same node, namely applying controlled-NOT operations on $Q_l$ and $Q_{l^\prime}$ and also $Q_m$ and $Q_{m^\prime}$.  Then the joint state of five qubits $Q_l$, $Q_{l^\prime}$, $Q_m$, $Q_{m^\prime}$ and $Q_n$ is given by 
\begin{equation}
\sum_{a,b}\lambda_{ab}\ket{ab}_{lm}\ket{ab}_{l'm'}\ket{\phi}_n.
\end{equation}
\item By consuming the Bell pairs corresponding to the vertical edges $\mathcal{S}_j$ between $v_{l,j}$ and $v_{n,j}$ and also between  $v_{m,j}$ and $v_{n,j}$, perform quantum teleportation to transmit the states of qubits $Q_{l^\prime}$ and $Q_{m^\prime}$ from nodes $v_{l,j}$ and $v_{m,j}$ to $v_{n,j}$. A circuit representation of the protocol of quantum teleportation represented by $\mathcal{T}$ in Fig.~\ref{fig:3CUcircuit} is given by Fig.~\ref{fig:Tcircuit}.  We denote indices of two qubits at node $v_{n,j}$  representing the teleported states from $Q_{l^\prime}$ and $Q_{m^\prime}$ by $Q_{l''}$ and $Q_{m''}$, respectively. 
\item At node $v_{n,j}$, perform $C_{ l,m;n}$ on  $\mathcal{H}_{Q_{l''}}  \otimes \mathcal{H}_{Q_{m''}} \otimes \mathcal{H}_{Q_n}$.  Then we obtain the state given by
\begin{equation}
\sum_{a,b}\lambda_{ab}\ket{ab}_{lm}\ket{ab}_{l''m''}u_n^{(ab)}\ket{\phi}_n.
\end{equation}
\item At node $v_{n,j}$, we apply the Hadamard operations and perform projective measurements in the computational basis on both $\mathcal{H}_{Q_{l''}}$ and $\mathcal{H}_{Q_{m''}}$.   The measurement outcomes of qubits $Q_{l''}$  and $Q_{m''}$ are sent to nodes $v_{l,j}$ and $v_{m,j}$, respectively, by classical communication.   At each of nodes  $v_{l,j}$ and $v_{m,j}$, if the measurement outcome is $0$, do nothing, and if the outcome is $1$, perform $Z$ for a correction on qubit $Q_l$ or $Q_m$.   By straightforward calculation, we obtain the state of three qubits $Q_l$, $Q_m$ and $Q_n$ at nodes $v_{l,j}, v_{m,j}$ and $v_{n,j}$, respectively, given by
\begin{equation}
\sum_{a,b}\lambda_{ab}\ket{ab}_{lm}u_n^{(ab)}\ket{\phi}_n.
\end{equation}
\end{enumerate}

Therefore, $C_{l,m;n}$ is successfully applied on the control qubits at nodes $v_{l,j}$ and $v_{m,j}$ and the target qubit at node $v_{n,j}$ by LOCC assisted by the Bell pairs corresponding to the vertical edges $\mathcal{S}_j$ between nodes $v_{l,j}$ and $v_{m,j}$.

\begin{figure}
\begin{center}
  \includegraphics[height=.25\textheight]{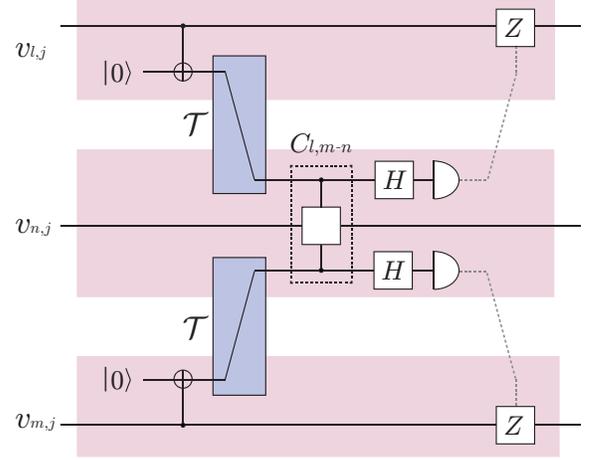}
  \end{center}
  \caption{A quantum circuit representation of the LOCC protocol implementing a three-qubit fully unitary operation $C_{ l,m;n}$, where the qubits in the first shaded region are at the node $v_{l,j}$, those in the second shaded region are at the node $v_{n,j}$ and those in the third shaded region are at the node $v_{m,j}$. The protocol consists of entangling ancillary qubits $Q_{l'}$  and $Q_{m'}$ at the nodes $v_{l,j}$ and $v_{m,j}$, respectively, by performing controlled-NOT operations at $v_{l,j}$ and $v_{m,j}$,  teleporting ancillary qubit states from the nodes $v_{l,j}$ and $v_{m,j}$ to the node $v_{n,j}$ represented by qubits $Q_{l''}$ and $Q_{m''}$ by applying a teleportation protocol denoted by $\mathcal{T}$, applying $C_{ l,m;n}$ on controlled qubits $Q_{l''}$  and $Q_{m''}$ and a target qubit $Q_{n}$ at the node $v_{n,j}$, performing Hadamard operations and measurements in the computational basis on $Q_{l''}$  and $Q_{m''}$ at node $v_{n,j}$ and finally applying conditional $Z$ operations depending on the measurement outcome on two control qubits $Q_l$, $Q_m$ at nodes $v_{l,j}$ and $v_{m,j}$, respectively.}
\label{fig:3CUcircuit}
\end{figure}

\begin{figure}
\begin{center}
  \includegraphics[height=.10\textheight]{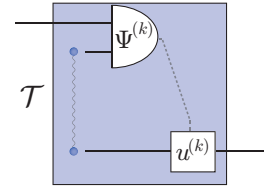}
  \end{center}
  \caption{A quantum circuit representation of the quantum teleportation protocol $\mathcal{T}$, where $\Psi^{(k)}$ represents the measurement projected in the Bell basis (the Bell measurement) $\{\ket{\Psi^{(k)}}\}=\{ (u^{(k)}\otimes \mathbb{I})\ket{\Phi^+} \}$ and $\{u^{(k)}\}=\{I,Z,X,ZX\}$ is a set of operations to be applied conditional on the measurement outcome specified by $k$.  Note that in case of $n \neq l\pm1$, we have to repeat the teleportation protocol to transmit a quantum state between the nodes via the neighboring nodes.  Thus all the Bell pairs corresponding to the vertical edges between $l$ and $n$ are consumed for performing teleportation.}
\label{fig:Tcircuit}
\end{figure}

 \subsection{A conversion protocol}
We present a  protocol to convert a given $(k,N)$-cluster network into quantum circuits. First  (step 1 to step 3), we construct quantum circuits of unitary operations that are implementable on qubits in a set of vertically aligned nodes $\mathcal{V}_j^v$ by LOCC assisted by the Bell pairs given for a set of vertically aligned edges $\mathcal{S}_j$ for a certain $j$.   Then (step 4),  we repeat the procedure given by the first part (step 1 to step 3) for different $j$ of $1 \leq j \leq N$.

The conversion protocol is specified as follows: 
\begin{enumerate}
\item Draw $k$ horizontal wire segments where each of the wire segments corresponds to a set of qubits at vertically aligned nodes $\mathcal{V}^v_j$.

\item Symbols representing two-qubit controlled unitary operations $C_{ l;n}$ or three-qubit fully controlled unitary operations $C_{ l,m;n}$ are added on the horizontal wire segments according to the following rules.

\begin{itemize}
\item To represent  $C_{l;n}$, draw a black dot representing a control qubit on the $l$-th wire, draw a vertical segment from the dot to the $n$-th wire segment and draw a box representing a target unitary operation on the $n$-th wire segment at the end of the vertical segment.  Write index $l$ at the side of the vertical segment between the horizontal wire segments. An example is shown in Fig.~\ref{fig:convertedcircuitex1} i).
\item To represent $C_{ l,m;n}$, draw two black dots representing control qubits on the $l$-th and $m$-th wire segments, draw vertical segments from each dot to the $n$-th wire and draw a box representing an arbitrary target unitary operation on the $n$-th wire segment at the end of the vertical segment.  Write indices $l$ and $m$ at the sides of the vertical segment between the horizontal wire segments. An example is shown in Fig.~\ref{fig:convertedcircuitex1} ii)
\item  Multiple gates of $C_{ l;n}$ or $C_{l,m;n}$ can be added as long as there are only one type of index appearing between the horizontal wire segments and no target unitary operation represented by a box is inserted between two black dots on a horizontal wire segment.
 An example of possible circuits generated in this protocol is shown in Fig.~\ref{fig:convertedcircuitex1} iii). We also give an example of circuits that do not follow the rule in Fig.~\ref{fig:convertedcircuitex1} iv).
\end{itemize}

\item Arbitrary single qubit unitary operations represented by boxes are inserted between before and after the sequence of $C_{ l;n}$ and $C_{ l,m;n}$ (but not during the sequence) obtained by step 2.

\item Repeat step 1 to step 3 for each $1 \leq j \leq N$ and connect all the $i$-th horizontal wire segments.
\end{enumerate}

In Appendix A, we show that a unitary operation represented by the quantum circuit obtained by step 1 to step 3 of the conversion protocol is implementable in a set of vertically aligned nodes $\mathcal{V}_j^v$, namely, it is implementable by LOCC assisted by $(k-1)$ Bell pairs corresponding to a set of vertically aligned edges $\mathcal{S}_j$.  As examples, quantum circuits converted from the $(2,3)$-cluster and $(3,2)$-cluster networks are shown in Fig.~\ref{fig:convertedcircuit}.

\begin{figure}
\begin{center}
  \includegraphics[height=.1\textheight]{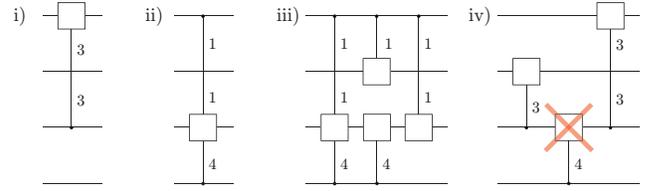}
  \end{center}
 \caption{ i) A symbol representing $C_{3;1}$. ii) A symbol representing $C_{1,4;3}$. iii) An example of circuits generated in step 2 of the conversion protocol. The index in the upper region is 1, that of index in the middle region is 1 and that of index in the lower region is 4.  iv) This conversion is forbidden since there is a target unitary operation inserted between two black dots representing controlled qubits.}
\label{fig:convertedcircuitex1}
\end{figure}

\begin{figure}
\begin{center}
  \includegraphics[height=.35\textheight]{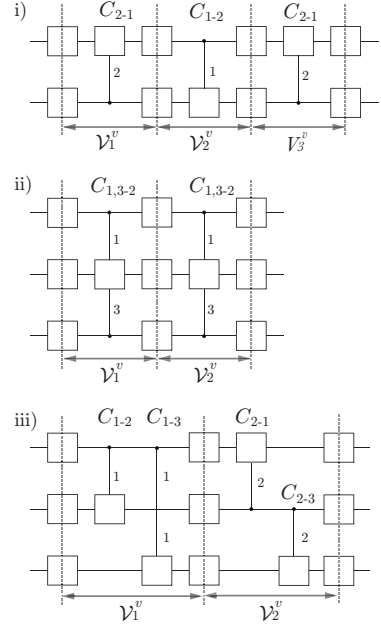}
  \end{center}
  \caption{Arbitrary single qubit unitary operations are represented by boxes. i) An example of converted quantum circuits from the $(2,3)$-cluster network.  It is obtained by connecting three segments of circuits generated in step 1 to  step 3 of the protocol corresponding to $\mathcal{V}_1^v$, $\mathcal{V}_2^v$ and $\mathcal{V}_3^v$. It consists of two-qubit controlled unitary operations defined by $C_{ l;n}=\ket{0}\bra{0}_l\otimes u_n^{(0)}+\ket{0}\bra{0}_l\otimes u_n^{(1)}$, where $l$ denotes the wire segment of the control qubit and $u_n^{(i)}$ are arbitrary single qubit unitary operations on the $n$-th qubit.
  ii) An example of converted quantum circuits from the $(3,2)$-cluster network.  It consists of three-qubit fully controlled unitary operations defined by $C_{ l,m;n}=\ket{00}\bra{00}_{l,m}\otimes u_n^{(00)}+\ket{01}\bra{01}_{l,m}\otimes u_n^{(01)}+\ket{10}\bra{10}_{l,m}\otimes u_n^{(10)}+\ket{11}\bra{11}_{l,m}\otimes u_n^{(11)}$, where $l, m$ denotes the wire segments of the control qubits and $u_n^{(ij)}$ are arbitrary single qubit unitary operations on the $n$-th qubit. iii) Another example of converted quantum circuits obtainable from the $(3,2)$-cluster network.}
\label{fig:convertedcircuit}
\end{figure}

Our conversion method generates infinitely many quantum circuits in general. However for special cluster networks, standard forms of quantum circuits can be obtained.   In Appendix B,  we show that  any converted circuit obtained from a $(2,3)$-cluster network can be simulated by the circuit presented in Fig.~\ref{fig:convertedcircuit} i), 
and any converted circuit obtained from a $(3,2)$-cluster network can be simulated by the circuit presented in Fig.\ref{fig:convertedcircuit} ii).

\section{Implementability of unitary operations over the butterfly and grail networks}

For classical network coding, it has been shown that there exists a network coding protocol over a $2$-pair network, which has two input nodes and two output nodes, if and only if the network has at least one of the butterfly, grail or identity induced substructures \cite{ButterflyGrail1,ButterflyGrail2}.
Thus any classical network coding protocol over a $2$-pair network can be reduced into a combination of protocols over the butterfly, grail or identity networks, and these networks are fundamental primitive networks for classical network coding.   As  a first step to investigate the implementability of quantum computation over general $2$-pair quantum networks, we investigate the implementability of two-qubit unitary operations over the butterfly and grail networks in this section by using the method for converting a $(k, N)$-cluster network into quantum networks introduced in the previous section.

We consider a two-qubit unitary operation $U$ given in the form of the Kraus-Cirac decomposition represented by Eq.~(\ref{KCD}).  Since it is trivial that the single-qubit unitary operations $v$ and $v^\prime$  are implementable at the input nodes and $w$ and $w^\prime$ are implementable at the output nodes, we just need to analyze the implementability of the two-qubit global unitary part $U_{global} (x,y,z) $ given by Eq.~(\ref{gloablpartofKCD}) over the butterfly and grail networks.  Owing to the implementability of a three-qubit fully controlled unitary operation over the butterfly network, we have discovered a protocol for implementing $U_{global}  (x,y,z)$ for arbitrary $x,y,z$ as presented in the constructive proof of the following Theorem.
\begin{figure}
\begin{center}
  \includegraphics[height=.16\textheight]{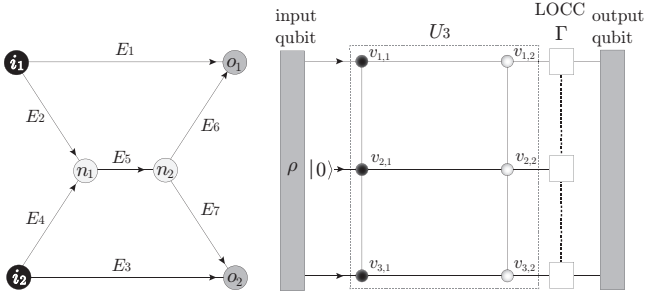}
  \end{center}
  \caption{The nodes $i_1$, $i_2$, $o_1$, $o_2$, $n_1$ and $n_2$ of the butterfly network correspond to the nodes $v_{1,1}$, $v_{3,1}$, $v_{1,2}$, $v_{3,2}$, $v_{2,1}$ and $v_{2,2}$ of a $(3,2)$-cluster network each other. The {\it two}-qubit unitary operation $U_{global} (x,y,z) = e^{ i(x X \otimes X+y Y \otimes Y+z Z \otimes Z)}$ is implementable over a $(3,2)$-cluster network by fixing an input state at the node $v_{2,1}$ at $\ket{0}$, performing an appropriate a {\it three-qubit} unitary operation $U_3$ and performing an appropriate LOCC map $\Gamma$ consisting of a measurement on the qubit at the output node $v_{2,2}$ and the conditional operations on the other output nodes $v_{1,2}$ and $v_{3,2}$ depending on the measurement outcome.
 }
\label{fig:butterflycluster}
\end{figure}

\begin{theorem}
Any two-qubit unitary operation is deterministically implementable over the butterfly network.
\end{theorem}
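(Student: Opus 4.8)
The plan is to reduce the statement to implementing the two-qubit global part $U_{global}(x,y,z)$ of \eqref{gloablpartofKCD} over the $(3,2)$-cluster network, and then to apply the conversion method of Section~IV. Writing an arbitrary two-qubit unitary $U$ in the Kraus--Cirac form \eqref{KCD}, the single-qubit factors $w,w'$ can be performed locally at the input nodes and $u,u'$ locally at the output nodes, so only $U_{global}(x,y,z)$ has to be realized through the network. Since quantum communication in our setting is realized by teleportation through shared Bell pairs, the orientation of edges is irrelevant, and the identification in Fig.~\ref{fig:butterflycluster} shows that the butterfly network is the same graph as the $(3,2)$-cluster network; hence it suffices to implement $U_{global}(x,y,z)$ over the $(3,2)$-cluster network for all $x,y,z$.

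For this I would use the conversion of Section~IV (and the standard form of Appendix~B): any three-qubit unitary $U_3$ consisting of two three-qubit fully controlled operations $C_{1,3;2}$ with freely chosen target single-qubit unitaries, interleaved with arbitrary single-qubit unitaries on the three wires, is implementable over the $(3,2)$-cluster network. Following Fig.~\ref{fig:butterflycluster}, I would fix the qubit at the middle input node $v_{2,1}$ to $\ket{0}$, run such a $U_3$, measure the middle output qubit at $v_{2,2}$ in the computational basis, broadcast the outcome, and apply an outcome-dependent single-qubit correction at each of $v_{1,2}$ and $v_{3,2}$. Tracing $\ket{0}$ through $U_3$ and projecting the middle qubit, the operation induced on the two data qubits is, for each outcome, a single-qubit-local operation composed with a matrix whose entries are those of a product unitary modulated entrywise by a matrix of rank at most two; after absorbing the boundary single-qubit unitaries into $w,w',u,u'$, the remaining task is to choose the target unitaries of the two $C_{1,3;2}$ gates and the intermediate local layer so that this operation equals $U_{global}(x,y,z)$ independently of the measurement outcome.

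Finally I would exhibit an explicit such choice, uniform in $(x,y,z)$: the first $C_{1,3;2}$ writes onto the middle qubit a state depending jointly on both control bits (the feature that distinguishes $C_{1,3;2}$ from a composition of two two-qubit controlled operations), the intermediate single-qubit layer rotates the data qubits so that the nonlocal content of $U_{global}$ along all of $X\otimes X$, $Y\otimes Y$ and $Z\otimes Z$ can be produced, and the second $C_{1,3;2}$ together with the measurement and a $Z$-type correction disentangles the middle qubit while leaving exactly $U_{global}(x,y,z)$ on the data; one then checks that both outcomes yield the same operation after correction, so the implementation is deterministic. I expect the main obstacle to be precisely this explicit construction and its verification: if the middle qubit were never measured, the two available $C_{1,3;2}$ gates could only produce operations locally equivalent to matchgates, i.e.\ with Kraus--Cirac number at most $2$, so covering the Kraus--Cirac number $3$ case---in particular the SWAP operation needed for $2$-pair quantum communication---genuinely requires the two measurement outcomes to contribute different operations that are then made to agree by the correction, and arranging this so that the protocol is deterministic simultaneously for every $(x,y,z)$ is the delicate point.
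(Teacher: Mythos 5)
Your plan follows the paper's proof structure exactly: absorb the local factors of the Kraus--Cirac form \eqref{KCD} into the input and output nodes, identify the butterfly network with the $(3,2)$-cluster network as in Fig.~\ref{fig:butterflycluster}, fix the qubit at $v_{2,1}$ in $\ket{0}$, implement a three-qubit unitary $U_3$ built from two fully controlled gates $C_{1,3;2}$ via the conversion method, then measure the qubit at $v_{2,2}$ in the computational basis, broadcast the outcome and correct at $v_{1,2}$ and $v_{3,2}$. But the entire content of the theorem lies in the step you defer: exhibiting target unitaries for the two $C_{1,3;2}$ gates, the intermediate single-qubit layer and the corrections such that the induced map on the two data qubits equals $U_{global}(x,y,z)$ of \eqref{gloablpartofKCD} for \emph{every} $(x,y,z)$, and is the \emph{same} for both measurement outcomes with an outcome probability independent of the input (otherwise coherence is destroyed and the map is not even unitary). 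You state this is ``the delicate point'' and the ``main obstacle'' but do not resolve it, so nothing in your write-up excludes the possibility that your scheme misses part of the Weyl chamber, e.g.\ the $\textsc{KC\#}=3$ stratum containing SWAP that is needed for $2$-pair communication. As written, this is a correct proof outline, not a proof.

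For comparison, the paper closes exactly this gap constructively: the first gate is $C_{1,3;2}$ with targets $u^{(00)}=u^{(11)}=\mathbb{I}$, $u^{(01)}=u^{(10)}=Z$ (which does factor into two controlled-$Z$'s), followed by Hadamards, then $C'_{1,3;2}$ with targets $w^{(00)}=w^{(11)}=e^{i(z-y)}\ket{0}\bra{0}-ie^{i(z+y)}\ket{1}\bra{1}$ and $w^{(01)}=w^{(10)}=e^{-i(z-y)}\ket{0}\bra{0}-ie^{-i(z+y)}\ket{1}\bra{1}$ (which genuinely requires the three-qubit fully controlled implementation), the rotation $u(x)$ on the middle wire, and a conditional $X$ (not a $Z$-type gate, though this depends on the chosen local frame) on both output qubits; Appendix~D then verifies determinism by tracking the four Bell-type eigenvectors of $U_{global}(x,y,z)$ through the circuit of Fig.~\ref{fig:butterflycircuit}, showing each acquires its eigenvalue $e^{i(\pm x\pm y\pm z)}$ and each outcome occurs with probability $1/2$ independently of the eigenvector. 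Your side remark that, without the measurement, the two $C_{1,3;2}$ gates could only yield operations with $\textsc{KC\#}\le 2$ is also unsubstantiated (the gates are three-qubit fully controlled, not two-qubit controlled gates on the data pair, so Lemma~\ref{lemma:KC1} does not apply directly), but it is immaterial; the essential missing piece is the explicit gate choice and the outcome-independence verification above.
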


\begin{proof}
For  the implementability of $U_{global} (x,y,z)$ over the butterfly network represented by the left hand side of Fig.~\ref{fig:butterflycluster}, we consider a  $(3,2)$-cluster network represented by the right hand side of Fig.~\ref{fig:butterflycluster} by assigning the nodes $\{ i_1, n_1, i_2, o_1, n_2, o_2 \}$ of the butterfly network  to the nodes $\{ v_{1,1}, v_{2,1}, v_{3,1}, v_{1,2}, v_{2,2}, v_{3,2} \}$ of the $(3,2)$-cluster network, respectively.   In this assignment, the correspondence of the edges of the butterfly network 
and the horizontal and vertical sets of edges $\mathcal{K}_1, \mathcal{S}_1, \mathcal{S}_2$ of the $(3,2)$-cluster network is given by 
\begin{eqnarray}
\{ E_1, E_5, E_3\} &\leftrightarrow& \mathcal{K}_1, \nonumber \\
\{ E_2, E_4\} &\leftrightarrow& \mathcal{S}_1, \nonumber \\
\{ E_6, E_7\}, &\leftrightarrow& \mathcal{S}_2.
\end{eqnarray}
Thus any two-qubit unitary operation is deterministically implementable over the butterfly network if any $U_{global} (x,y,z) $ in the form of Eq.~(\ref{gloablpartofKCD}) is deterministically implementable over the $(3,2)$-cluster network where input states are given at nodes $v_{1,1}$ and $v_{3,1}$ and output states are obtained at nodes $v_{1,2}$ and $v_{3,2}$, since the topology of the butterfly network is the same as that of the $(3,2)$-cluster network.  

We construct a protocol implementing two-qubit unitary $U_{global}  (x,y,z)$ by setting a fixed input state at node $v_{2,1}$ and arbitrary two-qubit input state at nodes $v_{1,1}$ and $v_{3,1}$ as a three-qubit input state at input nodes $\mathcal{I}=\{v_{1,1}, v_{2,1}, v_{3,1} \}$, and implementing a three-qubit unitary operation denoted by $U_3$ over the $(3,2)$-cluster network followed by an LOCC map denoted by $\Gamma$ performed at output nodes $\mathcal{O}=\{v_{1,2}, v_{2,2}, v_{3,2} \}$.  Recall that a unitary operation represented by the quantum circuit shown in Fig.~\ref{fig:convertedcircuit} ii) is implementable over the $(3,2)$-cluster network.    That is, two three-qubit fully controlled unitary operations $C_{1,3;2}$ are implementable, one at nodes $\mathcal{I}$ and another at nodes $\mathcal{O}$.   The following protocol shows that by choosing appropriate parameters for one of the three-qubit fully controlled unitary operations and one of single-qubit local unitary operations in $U_3$, we can implement $U_{global} (x,y,z)$ with arbitrary $x, y, z$.

The protocol for implementing $U_{global} (x,y,z)$: 
\begin{enumerate}
\item An arbitrary two-qubit input state $\rho$ is given for qubits at input nodes $v_{1,1}$ and $v_{3,1}$ and a fixed input state $\ket{0}$ is prepared for the qubit at node $v_{2,1}$.  
\item Implement $U_3$ of which the quantum circuit representation is given by the left shaded part of Fig.~\ref{fig:butterflycircuit} over the $(3,2)$-cluster network.   
\begin{enumerate}
\item All single-qubit unitary operations appearing in the circuit representation of $U_3$
are trivially performed at each node.  
\item The first fully controlled unitary operation implemented at input nodes $\mathcal{I}$ using the Bell pairs represented by  vertical edges $\mathcal{S}_1$ is given by  $C_{1,3;2} (\{ u_n^{(a b)} \}_{a,b=0,1} )$ where $u_n^{(00)}=u_n^{(11)}=\mathbb{I}$ and $u_n^{(01)}=u_n^{(10)}=Z$.
\item To transmit a qubit state from input nodes $v_{i,1}$ to output node $v_{i,2}$ for $i=1,2,3$, quantum teleportation is performed for each $i$ by using the  Bell pair represented by a horizontal edge in $\mathcal{K}_1$.
\item  The second fully controlled unitary operation implemented at output nodes $\mathcal{O}$ contains parameters $y$ and $z$ and  is given by
$C'_{1,3;2} (\{ w_n^{(a b)} \}_{a,b=0,1} )$ where
\begin{eqnarray}
w_n^{(00)}=w_n^{(11)}=e^{i(z-y)}\ket{0}\bra{0}-ie^{i(z+y)}\ket{1}\bra{1}, \nonumber \\
w_n^{(01)}=w_n^{(10)}=e^{-i(z-y)}\ket{0}\bra{0}-ie^{-i(z+y)}\ket{1}\bra{1}. \nonumber
\end{eqnarray}
\item After implementing $C'_{1,3;2} (\{ w_n^{(a b)} \}_{a,b=0,1} )$, a single-qubit unitary operation parameterized by $x$ given by
\begin{equation}
u(x) = \frac{1}{\sqrt{2}} \left(\begin{array}{cc} e^{i x} & -i e^{- i x} \\e^{i x} & i e^{- i x}\end{array}\right)
\end{equation}
is performed at node $v_{2,2} \in \mathcal{O}$. 
\end{enumerate}
\item Perform an LOCC map $\Gamma$ at output nodes $\mathcal{O}$ of which the quantum circuit representation is given by the right shaded part of Fig.~\ref{fig:butterflycircuit}.  The map  $\Gamma$ consists of the following three steps.
\begin{enumerate}
\item Perform a projective measurement on the qubit at node $v_{2,2}$  in the computational basis $\{ \ket{0}\bra{0}, \ket{1}\bra{1} \}$.
\item Classically communicate the measurement outcome $k \in \{ 0,1 \}$ from node $v_{2,2}$ to $v_{1,2}$ and also to $v_{3,2}$.
\item If $k=1$, perform a conditional operation $X$ on output qubits at nodes $v_{1,2}$ and $v_{3,2}$, otherwise do nothing.
\end{enumerate}
\end{enumerate}

This protocol maps any input state $\rho$ given at input nodes $v_{1,1}$ and $v_{3,1}$ to   
\begin{eqnarray}
U_{global}  (x,y,z)  \rho U_{global}^{\dag} (x,y,z) =\Gamma(U_3(\rho\otimes\ket{0}\bra{0})U_3^{\dag})
\end{eqnarray}
at output nodes $v_{1,2}$ and $v_{3,2}$ where $\ket{0}$ represents the fixed input state at node $v_{2,1}$.  See Appendix D for details of calculations.   It is straightforward to translate the protocol over the $(3,2)$-cluster network to a protocol to implement $U_{global} (x,y,z) $ over the butterfly network by using the correspondence of vertices and edges.  Thus, $U_{global} (x,y,z) $ is deterministically implementable over the butterfly network.
\end{proof}

\begin{figure}
\begin{center}
  \includegraphics[height=.12\textheight]{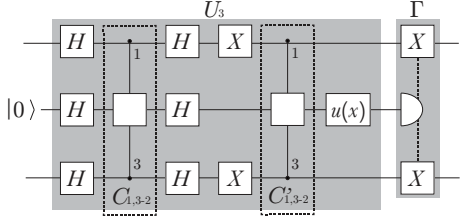}
  \end{center}
  \caption{ A quantum circuit representation of a three-qubit unitary operation $U_3$ (the left shaded part) and an LOCC map $\Gamma$ (the right shaded part) used in a protocol for implementing a two-qubit unitary operation $U_{global} (x,y,z) = e^{ i(x X \otimes X+y Y \otimes Y+z Z \otimes Z)}$ on the first and third qubits.  The input state of the second qubit is fixed in $\ket{0}$.
The single-qubit unitary operations represented by boxes labelled by $H$ and $X$ are given by $H=(\ket{0}\bra{0}+\ket{0}\bra{1} +\ket{1}\bra{0} - \ket{1}\bra{1})/\sqrt{2}$, $u(x)=H(e^{ix}\ket{0}\bra{0}-ie^{-ix}\ket{1}\bra{1}$) and $X= \ket{0}\bra{1} +\ket{1}\bra{0}$, respectively.  The target single-qubit unitary operations of the first three-qubit fully controlled unitary operation $C_{ 1,3;2} (\{ u_n^{(a b)} \}_{a,b=0,1} )$  are given by $u_n^{(00)}=u_n^{(11)}=\mathbb{I}$ and $u_n^{(01)}=u_n^{(10)}=Z$.  The target single-qubit unitary operations of the second three-qubit fully controlled unitary operation $C'_{1,3;2} (\{ w_n^{(a b)} \}_{a,b=0,1} )$  are  given by $w_n^{(00)}=w_n^{(11)}=e^{i(z-y)}\ket{0}\bra{0}-ie^{i(z+y)}\ket{1}\bra{1}$ and $w_n^{(01)}=w_n^{(10)}=e^{-i(z-y)}\ket{0}\bra{0}-ie^{-i(z+y)}\ket{1}\bra{1}$.  The half circle symbol represents a projective measurement in the computational basis $\{ \ket{k} \bra{k} \}_{k=0,1}$.  The single-qubit unitary operations (boxes) connected to the measurement symbol by dotted lines represent conditional unitary operations performed only if the measurement result is $k=1$ and do nothing (or perform $\mathbb{I}$) if $k=0$.}
\label{fig:butterflycircuit}
\end{figure}

In the implementation protocol of $U_{global} (x,y,z)$ presented in the above proof, the first-three qubit fully controlled operation $C_{1,3;2} (\{ u_n^{(a b)} \}_{a,b=0,1} )$ where $u_n^{(00)}=u_n^{(11)}=\mathbb{I}$ and $u_n^{(01)}=u_n^{(10)}=Z$ can be decomposed into a sequence of two controlled-Z operations $C_{1;2} (\{ u_2^{(0)}=\mathbb{I},  u_2^{(1)}=Z\})$ and $C_{3;2} =(\{ u_2^{(0)}=\mathbb{I},  u_2^{(1)}=Z\})$.   This sequence of two-controlled Z operation can be implementable by consuming two Bell pairs corresponding to the edges $E_2$ and $E_4$.   On the other hand, the second three-qubit fully controlled operation  $C'_{1,3;2} (\{ w_n^{(a b)} \}_{a,b=0,1} )$  where $w_n^{(00)}=w_n^{(11)}=e^{i(z-y)}\ket{0}\bra{0}-ie^{i(z+y)}\ket{1}\bra{1}$ and $w_n^{(01)}=w_n^{(10)}=e^{-i(z-y)}\ket{0}\bra{0}-ie^{-i(z+y)}\ket{1}\bra{1}$ {\it cannot} be decomposed into two two-qubit controlled unitary operations which are implementable by using a Bell pair for each.  This is the reason why direct implementability of a three-qubit fully controlled unitary operation by just consuming two vertical Bell pairs corresponding to $E_6$ and $E_7$ is the key for proving implementability of $U_{global} (x,y,z)$ over the butterfly network. 

For the implementability of $U_{global} (x,y,z)$ over the grail network, we consider a $(2,3)$-cluster network by assigning the nodes $\{n_1,n_2,o_1,i_2,n_3,n_4 \}$ of the grail network  to the nodes $\{ v_{1,1}, v_{1,2}, v_{1,3}, v_{2,1}, v_{2,2}, v_{2,3} \}$ of the $(2,3)$-cluster network, respectively (Fig.~\ref{fig:grailcluster}).  The $(2,3)$-cluster network can be converted to a quantum circuit containing three controlled-NOT operations and arbitrary single unitary operations that are inserted between the controlled-NOT operations.   It is shown that any two-qubit unitary operations $U_{global}  (x,y,z)$ can be decomposed by three controlled-NOT gates and single unitary operations inserted between the controlled-NOT operations  \cite{3CNOT}. Thus any two-qubit unitary operation is deterministically implementable over the grail network.

\begin{figure}
\begin{center}
  \includegraphics[height=.16\textheight]{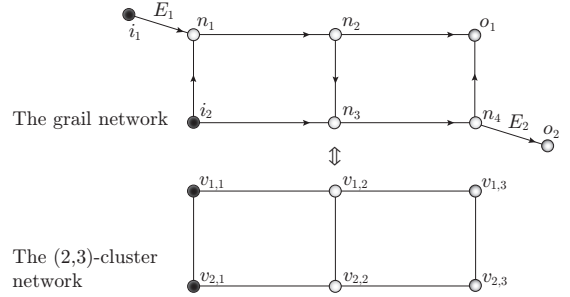}
  \end{center}
  \caption{ The nodes $n_1$, $n_2$, $o_1$, $i_2$, $n_3$ and $n_4$ of the grail network correspond to the nodes $v_{1,1}$, $v_{1,2}$, $v_{1,3}$, $v_{2,1}$, $v_{2,2}$ and $v_{2,3}$ of a $(2,3)$-cluster network, respectively. The set of all unitary operations implementable over the $(2,3)$-cluster network is also implementable over the grail network, since we can use the edges $E_1$ and $E_2$ for just teleporting qubits and the rest of the network forms the $(2,3)$-cluster network, with which any two-qubit unitary operation is implementable.
}  
\label{fig:grailcluster}
\end{figure}

\section{The set of all implementable unitary operations for $k=2,3$}

In this section, we derive the condition for $k$-qubit unitary operations to be implementable over a given cluster network. We show that our conversion method presented in Section III gives all implementable unitary operations over the $(k,N)$-cluster network for $k=2,3$.
\begin{theorem}
If i) a $k$-qubit unitary operation $U$ is deterministically implementable over the $(k,N)$-cluster network ($k\geq 2, N\geq 1$), then ii)  the matrix representation of $U$ in terms of the computational basis $U^M$ can be decomposed into
\begin{equation}
U^M=V_1^MV_2^M\cdots V_N^M,
\label{eq:udec}
\end{equation}
where each $V_i^M$ is a $2^k$ by $2^k$ unitary matrix such that
\begin{eqnarray}
V_i^M&=&\sum_{a_1=0}^1\sum_{a_2=0}^1\cdots\sum_{a_{k-1}=0}^1E^{(a_1)}_{1,i}\otimes E^{(a_1,a_2)}_{2,i}\otimes E^{(a_2,a_3)}_{3,i}\nonumber\\
&&\otimes\cdots\otimes E^{(a_{k-2},a_{k-1})}_{k-1,i}\otimes E^{(a_{k-1})}_{k,i},
\label{eq:dec}
\end{eqnarray}
where $E_{i,j}^{(m,n)}$ and $E_{i,j}^{(m)}$ are 2 by 2 complex matrices.
\end{theorem}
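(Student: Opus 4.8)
The plan is to establish directly that i) implies ii), by recognising any exact deterministic LOCC implementation of $U$ over the $(k,N)$-cluster network as a tensor-network contraction over the $k\times N$ grid in which every internal bond has dimension $2$, and then contracting that network one column at a time; the unitarity of $U$ enters only at the outset. First I would pass to a single measurement branch. An LOCC map $\Gamma$ can always be written with Kraus operators $\{A^{(b)}\}$ indexed by the list $b$ of all classical messages ever sent, each $A^{(b)}$ a tensor product of local operators, one per node $v_{r,c}$ acting only on the qubits situated at $v_{r,c}$, namely its incident Bell-pair halves $S^1_{r,c},S^2_{r,c},K^1_{r,c},K^2_{r,c}$, the input qubit $I_r$ if $c=1$, the output qubit $O_r$ if $c=N$, and local ancillas (which are absorbed into that operator). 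Set $\tilde A^{(b)}:=A^{(b)}(\mathbb{I}_{\mathcal I_Q}\otimes\ket{\Phi}_{\mathcal R})\in\mathbf{L}(\mathcal H_{\mathcal I_Q}:\mathcal H_{\mathcal O_Q})$. Since $\Gamma$ implements $U$ exactly, $\sum_b\tilde A^{(b)}\ket{\psi}\bra{\psi}\tilde A^{(b)\dagger}=U\ket{\psi}\bra{\psi}U^{\dagger}$ for every $\ket{\psi}$; as the right-hand side is an extreme point of the state space, each summand is proportional to it, so $\tilde A^{(b)}\ket{\psi}=c_b(\psi)U\ket{\psi}$, and linearity in $\ket{\psi}$ forces $c_b(\psi)$ to be a constant $c_b$ with $\sum_b|c_b|^2=1$. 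Fixing any $b$ with $c_b\neq 0$ gives $\tilde A^{(b)}=c_bU$.

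Next I would make the grid structure explicit and contract it in two stages. Contracting $A^{(b)}$ against the resource state $\ket{\Phi}_{\mathcal R}=\bigotimes_{e}\ket{\Phi^+}_e$ glues, for every edge $e=(v,v')$, the corresponding edge-qubit leg of the local operator at $v$ to that at $v'$, producing one bond of dimension $2$ (up to a transpose coming from $\ket{\Phi^+}=\sum_j\ket{jj}$). Hence $c_bU=\tilde A^{(b)}$, as a map $\mathcal H_{\mathcal I_Q}\to\mathcal H_{\mathcal O_Q}$, is the contraction of a $k\times N$ grid tensor network whose vertex tensors $t_{r,c}$ are the conditioned local operators, with a dimension-$2$ bond along each horizontal edge of $\mathcal K$ and each vertical edge of $\mathcal S$, external input legs $\{I_r\}$ on column $1$, and external output legs $\{O_r\}$ on column $N$. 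Contracting the $k-1$ vertical bonds inside a column $c$ collapses its linear chain $t_{1,c}-t_{2,c}-\cdots-t_{k,c}$ to a single tensor $V_c$ on the $k$ remaining horizontal-bond legs: writing $E^{(a_{r-1},a_r)}_{r,c}$ for the $2\times2$ block of $t_{r,c}$ with $a_{r-1}$ the incoming and $a_r$ the outgoing vertical-bond index (only one such index present at $r=1$ and $r=k$) and summing over $a_1,\dots,a_{k-1}$ yields exactly the form of Eq.~(\ref{eq:dec}) for $V_c^M$; since the $2\times2$ blocks are arbitrary, the transposes introduced above do not alter this form. Contracting the remaining horizontal bonds between consecutive columns then gives $c_bU^M=V_1^MV_2^M\cdots V_N^M$, and absorbing $c_b^{-1}$ into the blocks $E^{(\cdot)}_{1,1}$ produces Eq.~(\ref{eq:udec}).

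The principal difficulty is bookkeeping rather than conceptual: justifying rigorously that conditioning on every classical message collapses a general multi-round adaptive LOCC protocol (with ancillas possibly introduced and discarded mid-protocol) into a genuine tensor product of fixed local linear maps on the fixed shared resource state; matching the leg structure of each conditioned local map to precisely the edges of the cluster network, tracking the $S^{1/2}$, $K^{1/2}$ labelling and which qubit sits at which node initially; and handling the transposes produced by the $\ket{\Phi^+}$ contractions consistently. Once that reduction is secured, the key feature, that every bond has dimension $2$ because each edge carries a single Bell pair, is immediate, and with it both the matrix-product form of each $V_c$ and the product factorisation $U^M=V_1^M\cdots V_N^M$ are forced by the topology of the $(k,N)$-cluster network. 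For $k=2,3$ this necessary condition turns out to be sufficient as well, coinciding with what the conversion method produces, but that is established separately.
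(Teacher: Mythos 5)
Your reduction to a single measurement branch and the column-by-column tensor-network contraction are sound as far as they go, and they do reproduce the combinatorial part of the claim: $c_bU^M=V_1^M\cdots V_N^M$ with each $V_c^M$ of the matrix-product form of Eq.~(\ref{eq:dec}), bond dimension $2$ per edge. But this is only the conclusion of Theorem 4 (the probabilistic condition), not of Theorem 2: the statement you are asked to prove requires each factor $V_i^M$ to be a $2^k\times 2^k$ \emph{unitary} matrix, and nothing in your argument delivers that. Once you fix one branch $b$, the conditioned local operators $A^{(b)}_{r,c}$ are generic Kraus operators, not isometries, so the column tensors $V_c$ you obtain by contracting the vertical bonds are merely complex matrices of the form (\ref{eq:dec}); the fact that their product is proportional to a unitary does not let you re-factor it into unitary matrices that still have the form (\ref{eq:dec}) --- that re-factoring claim is essentially the (open, for general $k$) question of whether the deterministic and probabilistic conditions coincide, which the paper settles only for $k=2,3$. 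Your remark that ``the unitarity of $U$ enters only at the outset'' is precisely the symptom: in your proof the deterministic hypothesis is used only to get $\tilde A^{(b)}=c_bU$ on one branch, which is exactly the SLOCC-level information.

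The missing ingredient is where the paper does its real work: determinism means the branches satisfy the completeness relation $\sum_m\otimes_{i,j}A_{i,j}^{m\dag}A_{i,j}^m=\mathbb{I}$ \emph{in addition to} every branch producing $\sqrt{p_m}\ket{U}$, and the resource state restricted to any vertical cut (column $1$ versus the rest, columns $1,2$ versus the rest, and so on) is maximally entangled across that cut. Lemma~1 of the paper exploits exactly this: grouping the nodes on either side of a cut into $E_m\otimes F_m$, the normalization plus the fixed pure maximally entangled output force, via a trace and Cauchy--Schwarz argument on the eigenvalues of $E_m^{M\dag}E_m^M$, that $E_m^M$ is proportional to a unitary for every branch with $p_m\neq0$. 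Iterating this over the cuts after columns $1,2,\dots,N-1$ peels off unitary factors $V_1^M,V_2^M,\dots$ each of which inherits the form (\ref{eq:dec}) from the node-wise product structure, and the last factor $V_N^M$ is then unitary because $U^M$ and the others are. To repair your proof you would need to import this cut-by-cut unitarity argument (or an equivalent use of the completeness relation over all branches); the single-branch tensor contraction alone cannot yield it.
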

 
To prove Theorem 2, we  first prove Lemma \ref{lemma:subunitary} about a class of bipartite {\it separable maps}  that preserves entanglement.
A bipartite separable map $\Gamma_{sep}$ is a completely positive and trace preserving (CPTP) map whose Kraus operators are product as follows:
\begin{equation}
\Gamma_{sep}(\rho_{EF})=\sum_k (E_k\otimes F_k)\rho_{EF}(E_k\otimes F_k)^{\dag},
\end{equation}
where $\sum_k (E_k\otimes F_k)^{\dag}(E_k\otimes F_k)=I_E\otimes I_F$.
Since quantum network coding is equivalent to perform LOCC assisted by the resource state in our setting, we have to analyze multipartite LOCC. However, the analysis of multipartite LOCC is extremely difficult. Thus, we analyze multipartite separable maps, which are much easier to analyze than LOCC due to their simple structure. Note that a set of separable maps is exactly larger than that of LOCC \cite {sep}.

Let $\ket{\Psi_{in}}=\frac{1}{\sqrt{d}}\sum_{i=0}^{d-1} \ket{A_i}\ket{B_i}$ and $\ket{\Psi_{out}}=\frac{1}{\sqrt{d}}\sum_{i=0}^{d-1} \ket{a_i}\ket{b_i}$, where $\{\ket{A_i}\in\mathcal{H}_A\}$ and $\{\ket{B_i}\in\mathcal{H}_B\}$ are orthonormal sets and $\{\ket{a_i}\in\mathcal{H}_a\}$ and $\{\ket{b_i}\in\mathcal{H}_b\}$ are orthonormal bases. Note that $\dim(\mathcal{H}_a)=\dim(\mathcal{H}_b)=d$ but the dimensions of $\mathcal{H}_A$ and $\mathcal{H}_B$ can be higher than $d$, therefore $\{\ket{A_i} \}$ and $\{\ket{B_i}\}$ do not need to form bases.

\begin{lemma}
Let $\{E_k\in\mathbf{L}(\mathcal{H}_A:\mathcal{H}_a)\},\{F_k\in\mathbf{L}(\mathcal{H}_B:\mathcal{H}_b)\}$ be sets of linear operators. If $\{ E_k \otimes F_k \}$ satisfies
\begin{equation}
\sum_k E_k^{\dag}E_k\otimes F_k^{\dag}F_k=\mathbb{I}_{AB}
\end{equation}
and for all $k$,
\begin{equation}
E_k\otimes F_k \ket{\Psi_{in}}=\sqrt{p_k}\ket{\Psi_{out}}
\end{equation}
is satisfied, then for all $\{k|p_k\neq0\}$,
\begin{equation}
\exists \alpha_k>0,\,\exists U_k^M\in\mathbf{U}(\mathbb{C}^d),\: E^M_k=\alpha_k U^M_k, F^M_k=\frac{\sqrt{p_k}}{\alpha_k}\overline{U_k^M},
\end{equation}
where $E^M_k$ and $F^M_k$ are $d$ by $d$ matrices such that $(E^M_k)_{i,j}=\bra{a_i}E_k\ket{A_j}$, $(F^M_k)_{i,j}=\bra{b_i}F_k\ket{B_j}$, $\mathbf{U}(\mathbb{C}^d)$ is the set of $d$ by $d$ unitary matrices and $\overline{U^M}$ is the complex conjugate of $U^M$.
\label{lemma:subunitary}
\end{lemma}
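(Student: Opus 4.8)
The plan is to reduce the lemma to two matrix identities by means of the ``transpose trick'' for maximally entangled states, and then to force the desired rigidity from a dimension-counting saturation argument. First I would rewrite the hypothesis $E_k\otimes F_k\ket{\Psi_{in}}=\sqrt{p_k}\ket{\Psi_{out}}$ as a matrix equation. Expanding $E_k\ket{A_j}=\sum_i (E^M_k)_{ij}\ket{a_i}$ and $F_k\ket{B_j}=\sum_i (F^M_k)_{ij}\ket{b_i}$, which is legitimate since $\{\ket{a_i}\}$ and $\{\ket{b_i}\}$ are orthonormal bases of $\mathcal H_a$ and $\mathcal H_b$, the left-hand side becomes $\frac{1}{\sqrt d}\sum_{i,i'}\bigl(E^M_k (F^M_k)^T\bigr)_{ii'}\ket{a_i}\ket{b_{i'}}$ while the right-hand side is $\frac{\sqrt{p_k}}{\sqrt d}\sum_i\ket{a_i}\ket{b_i}$; comparing coefficients gives $E^M_k (F^M_k)^T=\sqrt{p_k}\,\mathbb I_d$ for every $k$. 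In particular, for each $k$ with $p_k\neq 0$ the matrix $E^M_k$ is invertible and $F^M_k=\sqrt{p_k}\bigl((E^M_k)^T\bigr)^{-1}$, so $(E^M_k)^T$ and $((E^M_k)^T)^{-1}$ have reciprocal singular values.

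Next I would squeeze two scalar facts out of the completeness relation $\sum_k E_k^\dagger E_k\otimes F_k^\dagger F_k=\mathbb I_{AB}$. Taking its expectation in $\ket{\Psi_{in}}$ and using $\|(E_k\otimes F_k)\ket{\Psi_{in}}\|^2=p_k$ gives $\sum_k p_k=1$. Writing $P_A=\sum_i\ket{A_i}\bra{A_i}$ and $P_B=\sum_i\ket{B_i}\bra{B_i}$ for the rank-$d$ projectors onto the relevant subspaces, and using $\mathrm{tr}\bigl((E^M_k)^\dagger E^M_k\bigr)=\mathrm{tr}(P_AE_k^\dagger E_k)$ with the analogous identity for $F_k$, evaluating $\mathrm{tr}\bigl((P_A\otimes P_B)(\,\cdot\,)\bigr)$ on the completeness relation yields $\sum_k \mathrm{tr}\bigl((E^M_k)^\dagger E^M_k\bigr)\,\mathrm{tr}\bigl((F^M_k)^\dagger F^M_k\bigr)=\mathrm{tr}(P_A)\mathrm{tr}(P_B)=d^2$.

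Then I would close the argument by saturation. For $p_k\neq 0$, if $\sigma_1,\dots,\sigma_d>0$ are the singular values of $E^M_k$, the first two paragraphs give $\mathrm{tr}\bigl((E^M_k)^\dagger E^M_k\bigr)=\sum_i\sigma_i^2$ and $\mathrm{tr}\bigl((F^M_k)^\dagger F^M_k\bigr)=p_k\sum_i\sigma_i^{-2}$, so the $k$-th summand above equals $p_k(\sum_i\sigma_i^2)(\sum_i\sigma_i^{-2})\geq p_k d^2$ by Cauchy--Schwarz, with equality exactly when all $\sigma_i$ coincide, i.e. when $E^M_k=\alpha_k U^M_k$ for some $\alpha_k>0$ and some unitary $U^M_k$. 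Since the summands with $p_k=0$ are nonnegative and $\sum_k p_k=1$, the identity $\sum_k(\cdots)=d^2$ forces every one of these inequalities to be tight; hence $E^M_k=\alpha_k U^M_k$ for every $k$ with $p_k\neq 0$, and substituting back into $F^M_k=\sqrt{p_k}\bigl((E^M_k)^T\bigr)^{-1}$ gives $F^M_k=\tfrac{\sqrt{p_k}}{\alpha_k}\overline{U^M_k}$, using $\bigl((U^M)^T\bigr)^{-1}=\overline{U^M}$ for a unitary $U^M$.

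The hard part is not any single calculation but keeping the bookkeeping straight: the sets $\{\ket{A_i}\},\{\ket{B_i}\}$ span only proper subspaces of $\mathcal H_A,\mathcal H_B$, so $E^M_k,F^M_k$ record just the restrictions of $E_k,F_k$ to those subspaces, and the projectors $P_A,P_B$ must be carried through the trace computation, as must the transpose/complex-conjugate identities. The conceptual core, once the transpose trick is invoked, is to recognize that the right scalar to bound is the product of the Hilbert--Schmidt norms of $E^M_k$ and $F^M_k$, and that the elementary inequality $\mathrm{tr}(M)\mathrm{tr}(M^{-1})\geq d^2$ for positive definite $d\times d$ matrices $M$ is saturated only when $M\propto\mathbb I$.
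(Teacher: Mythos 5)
Your proposal is correct and follows essentially the same route as the paper: the transpose trick gives $E^M_k (F^M_k)^T=\sqrt{p_k}\,\mathbb{I}_d$, the completeness relation restricted (traced) against the subspaces spanned by $\{\ket{A_i}\},\{\ket{B_i}\}$ gives the scalar identity summing to $d^2$, and saturation of the Cauchy--Schwarz bound $\mathrm{tr}(M)\,\mathrm{tr}(M^{-1})\geq d^2$ forces $E^M_k\propto$ unitary and $\epsilon$-terms with $p_k=0$ to vanish. Your explicit derivation of $\sum_k p_k=1$ and the careful handling of the projectors $P_A,P_B$ are details the paper leaves implicit, but the argument is the same.
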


\begin{proof}
By straightforward calculation, we obtain
\begin{eqnarray}
\forall k,\,E^M_k (F^M_k)^{T}&=&\sqrt{p_k}I_d\nonumber\\
\Rightarrow \forall k\in\{k|p_k\neq0\},\,F^M_k&=&\sqrt{p_k}((E^M_k)^{-1})^{T}
\label{eq:transposeM},
\end{eqnarray}
and
\begin{equation}
\sum_{k} (E^M_k)^{\dag}E^M_k\otimes (F^M_k)^{\dag}F^M_k=I_{d^2}.
\label{eq:closureM}
\end{equation}
By using Eq.~(\ref{eq:transposeM}) and Eq.~(\ref{eq:closureM}), we obtain
\begin{eqnarray}
&&{\rm tr}\left(\sum_k E_k^{M\dag}E_k^M\otimes F_k^{M\dag}F_k^M\right)\nonumber\\
&=&{\rm tr}\left(\sum_{k\in\{k|p_k\neq0\}} E_k^{M\dag}E_k^M\otimes F_k^{M\dag}F_k^M\right)+\epsilon\nonumber\\
&=&\sum_{k\in\{k|p_k\neq0\}}p_k {\rm tr}\left(E_k^{M\dag}E_k^M\otimes \overline{(E_k^{M\dag}E_k^M)^{-1}}\right)+\epsilon=d^2\nonumber\\
&\Leftrightarrow& \sum_{k\in\{k|p_k\neq0\}}p_k {\rm tr}\left(E_k^{M\dag}E_k^M\otimes \overline{(E_k^{M\dag}E_k^M)^{-1}}\right)=d^2-\epsilon,\nonumber\\
\label{eq:20150318}
\end{eqnarray}
where $\epsilon={\rm tr}\left(\sum_{k\in\{k|p_k=0\}} E_k^{M\dag}E_k^M\otimes F_k^{M\dag}F_k^M\right)\geq0$.
We let $P_k=E_k^{M\dag}E_k^M$ be a $d$ by $d$ positive matrix and $\{\lambda_k^i>0|i=0,1,\cdots,d-1\}$ be the set of eigenvalues of $P_k$. Then the eigenvalues of $ \overline{(E_k^{M\dag}E_k^M)^{-1}}=\overline{P_k^{-1}}$ are $\{1/\lambda_k^i|i=0,1,\cdots,d-1\}$ and the condition Eq.~\eqref{eq:20150318} is given by
\begin{equation}
\sum_{k\in\{k|p_k\neq0\}} p_k\sum_{i=0}^{d-1} \lambda_k^i\sum_{j=0}^{d-1} \frac{1}{\lambda_k^j}=d^2-\epsilon.
\label{eq:d-e}
\end{equation}
Using the Cauchy-Schwarz inequality, we obtain
\begin{eqnarray}
\sum_{i=0}^{d-1} \lambda_k^i\sum_{j=0}^{d-1} \frac{1}{\lambda_k^j}&=&\left(\sum_{i=0}^{d-1} \sqrt{\lambda_k^i}^2\right)\left(\sum_{j=0}^{d-1} \sqrt{\frac{1}{\lambda_k^j}}^2\right)\nonumber\\
&\geq& \left(\sum_{i=0}^{d-1}1\right)^2=d^2.
\label{eq:csinequality}
\end{eqnarray}
The equality holds if and only if $\lambda_k^i=\alpha^2 >0$ for all $i$.
By using  Eqs.~(\ref{eq:d-e})-(\ref{eq:csinequality}) and the fact that $\{p_k|p_k\neq0\}$ is a probability distribution, we obtain
for all $k\in\{k|p_k\neq0\}$,
\begin{eqnarray}
\exists \alpha>0;\:P_k=E_k^{M\dag}E_k^M=\alpha^2\mathbb{I}_d,\\
\epsilon=0.
\end{eqnarray}

\end{proof}

{\it Proof of Theorem 2.}
Denote by $\mathcal{H}_{\mathcal{I}_Q}=\otimes_{i=1}^k\mathcal{H}_{I_i}$ and $\mathcal{H}_{\mathcal{O}_Q}=\otimes_{i=1}^k\mathcal{H}_{O_i}$ the Hilbert spaces of $k$ input qubits and $k$ output qubits, respectively.    By introducing another  ancillary Hilbert space $\mathcal{H}_{I'_i}$ at the input nodes $v_{i,1}$, denote the Hilbert space of $k$ qubits by $\mathcal{H}_{\mathcal{I}'_Q}=\otimes_{i=1}^k\mathcal{H}_{I'_i}$. A joint state of $k$ copies of the Bell pairs in $\mathcal{H}_{\mathcal{I}_Q} \otimes \mathcal{H}_{\mathcal{I}'_Q}$ is denoted by 
$$\ket{\mathbb{I}}:= \frac{1}{\sqrt{D}} \sum_i\ket{i}_{\mathcal{I}_Q}\ket{i}_{\mathcal{I}'_Q}=\otimes_{i=1}^k\ket{\Phi^+}_{I_i,I'_i}$$ 
where $D=\dim(\mathcal{H}_{\mathcal{I}_Q})=2^k$.    If $U\in\mathbf{U}(\mathcal{H}_{\mathcal{I}_Q}:\mathcal{H}_{\mathcal{O}_Q})$ is deterministically implementable over a $(k,N)$-cluster network for $k \geq 2$ and $N \geq1$, it is possible to apply $U$ on $\ket{\mathbb{I}}$ and to transmit the resulting state to the output nodes. That is, there exists a LOCC map $\Gamma$ such that
\begin{equation}
\frac{1}{D} \sum_{i,j}\Gamma(\ket{i}\bra{j}_{\mathcal{I}_Q}\otimes\ket{\Phi}\bra{\Phi}_{\mathcal{R}})\otimes \ket{i}\bra{j}_{\mathcal{I}'_Q}=\ket{U}\bra{U},
\label{loccmapfornecessity}
\end{equation}
where $\ket{\Phi}_{\mathcal{R}}$ is the resource state of the $(k,N)$-cluster network and $\ket{U}$ is defined by
\begin{eqnarray}
\ket{U}:=(U \otimes \mathbb{I}) \ket{\mathbb{I}} \in \mathcal{H}_{\mathcal{O}_Q} \otimes \mathcal{H}_{\mathcal{I}'_Q}.
\label{defUket}
\end{eqnarray}

By defining a map represented by the left hand side of Eq.~(\ref{loccmapfornecessity}) as $\Gamma'(\ket{\Phi}\bra{\Phi}_{\mathcal{R}}):= \frac{1}{D} \sum_{i,j}\Gamma(\ket{i}\bra{j}_{\mathcal{I}_Q}\otimes\ket{\Phi}\bra{\Phi}_{\mathcal{R}})\otimes \ket{i}\bra{j}_{\mathcal{I}'_Q}$, where  $\Gamma^\prime$ is also a LOCC map if we assume two qubits belonging to $\mathcal{H}_{I_i}$ and $\mathcal{H}_{I'_i}$ are in the same input node for all $i$.  Since any LOCC maps are separable maps, there exists a separable map $\Gamma^\prime_{sep}$ satisfying
\begin{equation}
\Gamma^\prime_{sep} (\ket{\Phi}\bra{\Phi}_{\mathcal{R}})=\ket{U}\bra{U},
\label{eq:separable}
\end{equation}
if $U$ is deterministically implementable over a $(k,N)$-cluster network.
Since $\Gamma^\prime_{sep}$ is a map from a pure state to a pure state, the action of $\Gamma^\prime_{sep}$ represented by Eq.\eqref{eq:separable} can be equivalently given by the existence of a set of linear operators (the Kraus operators) $\{A_{i,j}^m\}_m$ for each node $v_{i,j}$ and a probability distribution $\{p_m\}$ 
such that
\begin{eqnarray}
\label{eq:LOCC1}
\forall m;\,\, \otimes_{i=1}^k\otimes_{j=1}^N A_{i,j}^m\ket{\Phi}_{\mathcal{R}}&=&\sqrt{p_m}\ket{U},\\
\sum_m \otimes_{i=1}^k\otimes_{j=1}^N (A_{i,j}^{m\dag}A_{i,j}^m)&=&\mathbb{I},
\end{eqnarray}
where
\begin{eqnarray}
A_{i,1}^m&\in&\mathbf{L}(\mathcal{H}_{v_{i,1}}:\mathcal{H}_{I'_i})\nonumber\\
&&(1\leq i\leq k),\nonumber\\
A_{i,j}^m&\in&\mathbf{L}(\mathcal{H}_{v_{i,j}}:\mathbb{C})\nonumber\\
&&(1\leq i\leq k,2\leq j\leq N-1),\nonumber\\
A_{i,N}^m&\in&\mathbf{L}(\mathcal{H}_{v_{i,N}}:\mathcal{H}_{O_i})\nonumber\\
&&(1\leq i\leq k),
\end{eqnarray}
and $\mathcal{H}_{v_{i,j}}$ is the Hilbert space of qubits of the resource state at node $v_{i,j}$ defined by
\begin{equation}
\mathcal{H}_{v_{i,j}}=\mathcal{H}_{S_{i,j}}\otimes\mathcal{H}_{K_{i,j}}
\end{equation}
\begin{subnumcases}
{\mathcal{H}_{S_{i,j}}=}
\mathcal{H}_{S_{1,j}^1} & ($i=1$) \\
\mathcal{H}_{S_{i,j}^1}\otimes\mathcal{H}_{S_{i,j}^2} & ($2\leq i\leq k-1$) \\
\mathcal{H}_{S_{k,j}^2} & ($i=k)$
\end{subnumcases}
\begin{subnumcases}
{\mathcal{H}_{K_{i,j}}=}
\mathcal{H}_{K_{i,1}^1} & ($j=1$)\\
\mathcal{H}_{K_{i,j}^1}\otimes\mathcal{H}_{K_{i,j}^2} & ($2\leq j\leq N-1$) \\
\mathcal{H}_{K_{i,N}^2} & ($j=N)$
\end{subnumcases}

First, letting $E_m=\otimes_{i=1}^k A_{i,1}^m$, $F_m=\otimes_{i=1}^k \otimes_{j=2}^N A_{i,j}^m$ and applying Lemma \ref{lemma:subunitary}, we obtain for all $m\in\{m|p_m\neq 0\}$,
\begin{eqnarray}
\exists \alpha_{1,m}>0,\exists V_{1,m}^M\in\mathbf{U}(\mathbb{C}^{D});\,\,E_m^M=\alpha_{1,m} V_{1,m}^M,
\end{eqnarray}
where $\mathbf{U}(\mathbb{C}^{D})$ is the set of $D$ by $D$ unitary matrices and $E_m^M$ is a $D$ by $D$ matrix satisfying $$(E_m^M)_{a,b}=\bra{a}_{\mathcal{I}'_Q}(\otimes_{i=1}^k A_{i,1}^m)\ket{A_b}_{S_{*,1}^*,K_{*,1}^1}$$ and $$\ket{A_b}_{S_{*,1}^*,K_{*,1}^1}=\otimes_{i=1}^{k-1}\ket{\Phi^+}_{S_{i,1}^1,S_{i+1,1}^2}\otimes \ket{b}_{K_{1,1}^1,\cdots,K_{k,1}^1}.$$
Let
\begin{eqnarray}
A_{1,1}^m&=&\sum_{a_1=0}^1 \bra{a_1}_{S_{1,1}^1}\otimes E_{1,1}^{(a_1),m}\\
A_{i,1}^m&=&\sum_{a_1=0}^1\sum_{a_2=0}^1 \bra{a_1}_{S_{i,1}^1}\bra{a_2}_{S_{i,1}^2}\otimes E_{i,1}^{(a_1,a_2),m}\nonumber\\
&&(2\leq i \leq k-1)\\
A_{k,1}^m&=&\sum_{a_1=0}^1 \bra{a_1}_{S_{k,1}^2}\otimes E_{k,1}^{(a_1),m}
\end{eqnarray}
where $E_{1,1}^{(a_1),m}\in\mathbf{L}(\mathcal{H}_{K_{1,1}^1}:\mathcal{H}_{I'_1})$, $E_{i,1}^{(a_1,a_2),m}\in\mathbf{L}(\mathcal{H}_{K_{i,1}^1}:\mathcal{H}_{I'_i})$ and $E_{k,1}^{(a_1),m}\in\mathbf{L}(\mathcal{H}_{K_{k,1}^1}:\mathcal{H}_{I'_k})$. Thus, $V_{1,m}^M$ can be decomposed into
\begin{eqnarray}
V_{1,m}^M&=&\sum_{a_1,\cdots,a_{k-1}=0}^1 E^{(a_1),m}_{1,1}\otimes E^{(a_1,a_2),m}_{2,1}\otimes\cdots\nonumber\\
&&\otimes E^{(a_{k-2},a_{k-1}),m}_{k-1,1}\otimes E^{(a_{k-1}),m}_{k,1}.
\end{eqnarray}
 Note that we identify a linear operation and its matrix representation in the computational basis, e.g., $E_{1,1}^{(a_1),m}$ is a 2 by 2 complex matrix.

Next, letting $E_m=\otimes_{i=1}^k \otimes_{j=1}^2 A_{i,j}^m$, $F_m=\otimes_{i=1}^k \otimes_{j=3}^N A_{i,j}^m$ and using Lemma \ref{lemma:subunitary}, we obtain for all $m\in\{m|p_m\neq 0\}$,
\begin{equation}
\exists \alpha_{2,m}>0,\exists V_{2,m}^M\in\mathbf{U}(\mathbb{C}^{D});\,\,E_m^M=\alpha_{2,m} V_{2,m}^M,
\end{equation}
where $E_m^M$ is a $D\times D$ matrix such that 
$$(E_m^M)_{a,b}=\bra{a}_{\mathcal{I}'_Q}(\otimes_{i=1}^k\otimes_{j=1}^2 A_{i,j}^m)\ket{A_b}_{S_{*,1}^*,S_{*,2}^*,K_{*,1}^1,K_{*,2}^*}$$ 
and 
\begin{eqnarray}
\ket{A_b}_{S_{*,1}^*,S_{*,2}^*,K_{*,1}^1,K_{*,2}^*}=\otimes_{i=1}^{k-1}\ket{\Phi^+}_{S_{i,1}^1,S_{i+1,1}^2}  \nonumber \\
\otimes_{i=1}^{k-1}\ket{\Phi^+}_{S_{i,2}^1,S_{i+1,2}^2}
\otimes_{i=1}^{k}\ket{\Phi^+}_{K_{i,1}^1,K_{i,2}^2} \nonumber \\
\otimes \ket{b}_{K_{1,2}^1,\cdots,K_{k,2}^1}. \nonumber
\end{eqnarray}
Let
\begin{eqnarray}
A_{1,2}^m&=&\sum_{a_1=0}^1 \bra{a_1}_{S_{1,2}^1}\otimes E_{1,2}^{(a_1),m}\\
A_{i,2}^m&=&\sum_{a_1=0}^1\sum_{a_2=0}^1 \bra{a_1}_{S_{i,2}^1}\bra{a_2}_{S_{i,2}^2}\otimes E_{i,2}^{(a_1,a_2),m}\nonumber\\
&&(2\leq i \leq k-1)\\
A_{k,2}^m&=&\sum_{a_1=0}^1 \bra{a_1}_{S_{k,2}^2}\otimes E_{k,2}^{(a_1),m},
\end{eqnarray}
where $E_{1,2}^{(a_1),m}\in\mathbf{L}(\mathcal{H}_{K_{1,2}^1}\otimes \mathcal{H}_{K_{1,2}^2}:\mathbb{C})$, $E_{i,2}^{(a_1,a_2),m}\in\mathbf{L}(\mathcal{H}_{K_{i,2}^1}\otimes\mathcal{H}_{K_{i,2}^2}:\mathbb{C})$ and $E_{k,2}^{(a_1),m}\in\mathbf{L}(\mathcal{H}_{K_{k,2}^1}\otimes\mathcal{H}_{K_{k,2}^2}:\mathbb{C})$. By straightforward calculation, $V_{2,m}^M$ are shown to be decomposed into
\begin{eqnarray}
V_{2,m}^M&=&V_{1,m}^M\sum_{a_1,\cdots,a_{k-1}=0}^1 E'^{(a_1),m}_{1,2}\otimes E'^{(a_1,a_2),m}_{2,2}\otimes\cdots\nonumber\\
&&\otimes E'^{(a_{k-2},a_{k-1}),m}_{k-1,2}\otimes E'^{(a_{k-1}),m}_{k,2},
\end{eqnarray}
where $E'^{(a_1),m}_{1,2},E'^{(a_1,a_2),m}_{i,2}$, and $E'^{(a_1),m}_{k,2}$ are $2\times 2$ complex matrices.

Iterating this procedure, we obtain for all $m\in\{m|p_m\neq 0\}$,
\begin{equation}
\exists \alpha>0,\exists W^M\in\mathbf{U}(\mathbb{C}^{D});\,\,E_m^M=\alpha W^M,F_m^M=\frac{\sqrt{p_m}}{\alpha}\overline{W^M},
\end{equation}
where $W^M$ and $\overline{W^M}$ can be decomposed into
\begin{eqnarray}
W^M=V_1^MV_2^M\cdots V_{N-1}^M\\
\overline{W^M}=U^{M\dag}V_N^M,
\end{eqnarray}
and $V_i^M=\sum_{a_1,\cdots,a_{k-1}=0}^1E^{(a_1)}_{1,i}\otimes E^{(a_1,a_2)}_{2,i}\otimes\cdots\otimes E^{(a_{k-2},a_{k-1})}_{k-1,i}\otimes E^{(a_{k-1})}_{k,i}\in\mathbf{U}(\mathbb{C}^{D})$.
$U^M$ can be decomposed into the form of Eq.\eqref{eq:udec} since $\overline{V_i^M}$ and $V_i^{M\dag}$ can be decomposed into the form of Eq.\eqref{eq:dec}.
\begin{flushright}
$\Box$
\end{flushright}

In the case of the $(2,N)$-cluster networks, which we call $N$-bridge {\it ladder networks}, $V_i$ is locally unitarily equivalent to the two-qubit controlled unitary operation since its operator Schmidt rank is $2$ \cite{SchmidtDecomp}. Thus,  statements i) and ii) of Theorem 2 are equivalent since a sequence of $N$ two-qubit controlled unitary operations is implementable by the converted circuit presented in Fig.~\ref{fig:convertedcircuit}. Then we obtain the following theorem for the ladder networks.

\begin{theorem}
A unitary operation $U$ is deterministically implementable over the $N$-bridge ladder network if and only if
$\textsc{KC\#}(U)\leq N$.
\label{theorem:main}
\end{theorem}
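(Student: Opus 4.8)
\noindent The plan is to obtain Theorem~\ref{theorem:main} from Theorem~2 specialized to $k=2$, where the two statements of that theorem are equivalent: on one hand the conversion of Section~IV turns a $(2,N)$-cluster network into the circuit of Fig.~\ref{fig:convertedcircuit}~i), namely a product of $N$ two-qubit controlled unitary operations separated by arbitrary single-qubit unitaries, which is implementable by Appendix~A; on the other hand, if $U$ is deterministically implementable then Theorem~2 gives $U^{M}=V_{1}^{M}\cdots V_{N}^{M}$ where each $V_{i}^{M}=\sum_{a}E_{1,i}^{(a)}\otimes E_{2,i}^{(a)}$ has operator Schmidt rank at most $2$, hence (by the classification in Section~II) $\textsc{KC\#}(V_{i})\le 1$, i.e.\ $V_{i}$ is a controlled unitary up to local unitaries. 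So the task reduces to the circuit identity: $U$ is a product of $N$ gates each locally unitarily equivalent to a two-qubit controlled unitary, interspersed with arbitrary single-qubit unitaries, if and only if $\textsc{KC\#}(U)\le N$.

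\medskip
\noindent\emph{Sufficiency.} Starting from the Kraus--Cirac decomposition $U=(u\otimes u')\,e^{i(xX\otimes X+yY\otimes Y+zZ\otimes Z)}\,(w\otimes w')$ with at most $N$ of $x,y,z$ nonzero, I use that $X\otimes X$, $Y\otimes Y$, $Z\otimes Z$ commute to factor the global part as $e^{ixX\otimes X}e^{iyY\otimes Y}e^{izZ\otimes Z}$; each factor is locally unitarily equivalent to a diagonal, hence controlled, two-qubit unitary (for instance $(H\otimes H)e^{ixX\otimes X}(H\otimes H)=e^{ixZ\otimes Z}=\ket{0}\bra{0}\otimes e^{ixZ}+\ket{1}\bra{1}\otimes e^{-ixZ}$). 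Dropping the trivial factors and padding with identities to length exactly $N$ produces a circuit of the form of Fig.~\ref{fig:convertedcircuit}~i), so $U$ is implementable over the $(2,N)$-cluster network.

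\medskip
\noindent\emph{Necessity.} For $N\ge 3$ there is nothing to prove since $\textsc{KC\#}(U)\le 3$ always, and $N=1$ is immediate: then $U=V_{1}$ has operator Schmidt rank at most $2$, so $\textsc{KC\#}(U)\le 1$ by the same classification. The genuine case is $N=2$: given $U=V_{1}V_{2}$ with $\textsc{KC\#}(V_{1}),\textsc{KC\#}(V_{2})\le 1$, I must show $\textsc{KC\#}(U)\le 2$, i.e.\ $U$ is locally equivalent to a matchgate, equivalently the product $xyz$ of its Weyl-chamber coordinates vanishes. I would work in the magic basis, in which local unitaries become real orthogonal matrices and the global part $e^{i(xX\otimes X+yY\otimes Y+zZ\otimes Z)}$ becomes diagonal; then $(V_{i})_{B}=O_{i}\Delta_{i}O_{i}'$ with $O_{i},O_{i}'\in SO(4)$ and $\Delta_{i}=\mathrm{diag}(\alpha_{i},\alpha_{i},\beta_{i},\beta_{i})$ block-scalar, the two-fold degeneracy being exactly the condition $\textsc{KC\#}(V_{i})\le 1$. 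Composing and using $\Delta_{i}^{T}=\Delta_{i}$, $O_{i}^{T}O_{i}=\mathbb{I}$ and cyclicity of the spectrum, $\mathrm{spec}(U_{B}^{T}U_{B})=\mathrm{spec}(\Delta_{1}^{2}R\,\Delta_{2}^{2}R^{T})$ for $R=O_{1}'O_{2}\in SO(4)$. Writing $\Delta_{1}^{2}$ and $R\Delta_{2}^{2}R^{T}$ as scalars times $\mathbb{I}+\gamma_{i}\Pi_{i}$ with $\Pi_{1}=\mathrm{diag}(1,1,0,0)$ and $\Pi_{2}=R\Pi_{1}R^{T}$ real rank-$2$ orthogonal projectors, the principal-angle normal form for the two real $2$-planes $\mathrm{Im}\,\Pi_{1}$ and $\mathrm{Im}\,\Pi_{2}$ block-diagonalizes the product into $N^{(1)}\oplus N^{(2)}$, and a direct computation gives $\det N^{(1)}=\det N^{(2)}$ (the principal angles cancel). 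Hence the two eigenvalues of $N^{(1)}$ pair against the two of $N^{(2)}$ into two pairs of equal product; since the eigenphases of $U_{B}^{T}U_{B}$ are, up to a common shift, the four Kraus--Cirac combinations $\pm x\pm y\pm z$ of one fixed parity, such an equal-product pairing forces $xyz=0$, i.e.\ $\textsc{KC\#}(U)\le 2$.

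\medskip
\noindent The main obstacle is exactly this $N=2$ step. Theorem~2 together with the crude multiplicativity bound $\textsc{Op\#}(V_{1}V_{2})\le\textsc{Op\#}(V_{1})\,\textsc{Op\#}(V_{2})\le 4$ is still consistent with $\textsc{KC\#}(U)=3$, so one must inject real information to exclude a product of two controlled unitaries from reaching a SWAP-type operation; the magic-basis/principal-angle computation above does this, and an alternative is to cite the two-qubit circuit-synthesis result that a unitary with nonzero $z$-coordinate requires at least three entangling gates. Everything else — the commuting-exponential factorization, the padding with identities, and the implementability of the converted circuit from Section~IV — is routine.
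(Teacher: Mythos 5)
Your reduction to the $k=2$ case of Theorem~2 and your sufficiency argument essentially match the paper: the paper proves Theorem~\ref{theorem:main} by combining Theorem~2 (each factor $V_i$ has operator Schmidt rank at most $2$, hence lies in $\mathbf{U}_c$) with Lemma~\ref{lemma:KC1}, which it does not reprove but quotes from \cite{SoedaAkibue}, and whose hard content is precisely $\{U\,|\,\textsc{KC\#}(U)\leq 2\}=\{UV\,|\,U,V\in\mathbf{U}_c\}$. Your commuting-exponential factorization is a fine direct proof of the inclusion needed for sufficiency.

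The genuine gap is exactly where you located the difficulty, the $N=2$ necessity step, and your magic-basis argument does not close it. The invariant you extract --- that $\mathrm{spec}(U_B^TU_B)$ splits into two pairs of equal product because $\det N^{(1)}=\det N^{(2)}$ --- is correct, but the final inference ``such an equal-product pairing forces $xyz=0$'' is false. The four eigenvalues are $e^{2i(x-y+z)},e^{2i(-x+y+z)},e^{2i(x+y-z)},e^{2i(-x-y-z)}$ (up to a common phase), and the three possible pairings have equal products precisely when $e^{8iz}=1$, $e^{8ix}=1$ or $e^{8iy}=1$, i.e.\ when one Weyl coordinate lies in $\{0,\pi/4,\pi/2\}$, not when one of them vanishes. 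The SWAP gate ($x=y=z=\pi/4$) has all four eigenvalues equal to $i$ and satisfies your pairing condition trivially, yet $\textsc{KC\#}(U_{swap})=3$; likewise any gate with $x=\pi/4$ and $y,z\neq 0$ passes your test while having Kraus--Cirac number $3$. So the determinant of the principal-angle blocks is too weak; to pin down $z=0$ you would need the trace data of $N^{(1)},N^{(2)}$ (equivalently the full Makhlin invariants), and that computation is exactly the nontrivial content of Lemma~\ref{lemma:KC1}. Your fallback of citing CNOT-count synthesis results also does not directly apply: those bounds count CNOTs, and a single controlled unitary can itself cost two CNOTs, so ``three entangling gates are needed when $z\neq0$'' does not exclude a product of two general controlled unitaries. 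To repair the proof, either cite Lemma~\ref{lemma:KC1} as the paper does, or carry out the stronger spectral computation rather than the determinant alone.
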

This theorem is proven by using the following lemma relating the Kraus-Cirac number of a two-qubit unitary operation and the decomposition of the unitary operation into controlled unitary operations  shown in \cite{SoedaAkibue}.

\begin{lemma}
\label{lemma:KC1}
Consider a set of two-qubit unitary operations $\textbf{U}_c$ that is locally unitarily equivalent to a controlled unitary operation.
The decomposition of a unitary operation $U \in SU(4)$ into a shortest sequence of two-qubit unitary operations in $\textbf{U}_c$ depends on the Kraus-Cirac number $\textsc{KC\#}(U)$ of $U$ as
\begin{eqnarray}
\{U\in SU(4)|\textsc{KC\#}(U)\leq1\}&=&\{U|U\in\textbf{U}_c\}\nonumber\\
\{U\in SU(4)|\textsc{KC\#}(U)\leq2\}&=&\{UV|U,V\in\textbf{U}_c\}\nonumber\\
\{U\in SU(4)|\textsc{KC\#}(U)\leq3\}&=&\{UVW|U,V,W\in\textbf{U}_c\}. \nonumber
\end{eqnarray}
\end{lemma}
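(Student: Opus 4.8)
The plan is to prove the three displayed set equalities by establishing the two obvious inclusions in each; in fact everything except one nonlocal estimate is bookkeeping. I work modulo global phase, so that $\mathbf{U}_c$ may be taken inside $SU(4)$.

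The first line, $\{U\in SU(4):\textsc{KC\#}(U)\le 1\}=\{U:U\in\mathbf{U}_c\}$, restates the classification recalled in Section~II: $\textsc{KC\#}(U)=0$ exactly when $U$ is a product of single-qubit unitaries (a degenerate controlled unitary) and $\textsc{KC\#}(U)=1$ exactly when $U$ is locally unitarily equivalent to a controlled unitary; conversely a controlled unitary $\ket{0}\bra{0}\otimes\mathbb{I}+\ket{1}\bra{1}\otimes u$, after diagonalising $u$ by a single-qubit unitary, is locally equivalent to $e^{i\alpha Z\otimes Z}$ and has $\textsc{KC\#}\le 1$. For ``$\subseteq$'' in the other two lines I would use that $X\otimes X$, $Y\otimes Y$, $Z\otimes Z$ commute, so the Kraus-Cirac global part factorises as $U_{global}(x,y,z)=e^{ixX\otimes X}e^{iyY\otimes Y}e^{izZ\otimes Z}$, each factor being---like $e^{i\theta Z\otimes Z}$---in $\mathbf{U}_c$. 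Writing $U=(u\otimes u')U_{global}(x,y,z)(w\otimes w')$ via Kraus-Cirac, deleting the factors with vanishing parameter, and absorbing the local gates into the leftmost and rightmost surviving factor (legitimate since $\mathbf{U}_c$ is closed under left and right multiplication by $SU(2)\otimes SU(2)$) exhibits $U$ as a product of exactly $\textsc{KC\#}(U)$ elements of $\mathbf{U}_c$, which one pads with $\mathbb{I}\in\mathbf{U}_c$ up to length $2$ or $3$ as needed. Since also every $U\in SU(4)$ satisfies $\textsc{KC\#}(U)\le 3$ and $\{UVW:U,V,W\in\mathbf{U}_c\}\subseteq SU(4)$, the third line is complete; ``$\supseteq$'' in the first line was handled above too.

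The remaining inclusion---and the crux---is ``$\supseteq$'' in the second line: $V_1,V_2\in\mathbf{U}_c$ implies $\textsc{KC\#}(V_1V_2)\le 2$. I would argue in the magic basis $Q$, writing $\widetilde U:=Q^{\dagger}UQ$. Then single-qubit unitaries become elements of $SO(4)$; the local-equivalence class of $U$ is determined by the eigenvalue multiset of the unitary symmetric matrix $\widetilde U^{T}\widetilde U$; and a direct computation gives $\mathrm{Im}\,\mathrm{tr}(\widetilde U^{T}\widetilde U)\propto\sin 2x\,\sin 2y\,\sin 2z$, so $\textsc{KC\#}(U)\le 2$ is equivalent to $\mathrm{tr}(\widetilde U^{T}\widetilde U)\in\mathbb{R}$. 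Since $e^{i\theta Z\otimes Z}$ is diagonal in the magic basis with spectrum $\{e^{i\theta},e^{i\theta},e^{-i\theta},e^{-i\theta}\}$, every $V\in\mathbf{U}_c$ has $\widetilde V=O_1(\cos\theta\,\mathbb{I}+i\sin\theta\,R_0)O_2$ with $O_1,O_2\in SO(4)$ and $R_0$ a real symmetric involution of trace zero. Substituting $\widetilde V_1\widetilde V_2$ and using $O^{T}O=\mathbb{I}$ together with cyclicity of the trace shows that $\mathrm{tr}(\widetilde U^{T}\widetilde U)$ equals the trace of a product $(\cos\phi_1\,\mathbb{I}+i\sin\phi_1\,R_1)(\cos\phi_2\,\mathbb{I}+i\sin\phi_2\,R_2)$ with $R_1,R_2$ real symmetric trace-zero involutions; expanding, every contributing term is real (the cross terms carry a factor $\mathrm{tr}(R_j)=0$, and $\mathrm{tr}(R_1R_2)\in\mathbb{R}$ since $R_1R_2$ is real), so $\mathrm{tr}(\widetilde U^{T}\widetilde U)\in\mathbb{R}$ and $\textsc{KC\#}(V_1V_2)\le 2$ follows.

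I expect the main obstacle to be setting up this magic-basis dictionary cleanly: verifying that $\textsc{KC\#}(U)\le 2$ is precisely the reality of the invariant $\mathrm{tr}(\widetilde U^{T}\widetilde U)$, and that the elements of $\mathbf{U}_c$ take the stated $O_1(\cos\theta\,\mathbb{I}+i\sin\theta\,R_0)O_2$ form; once these are in place the product estimate is the short trace computation above. An alternative that avoids the magic basis is to reduce $V_1V_2$ modulo local unitaries to $e^{i\theta_1 Z\otimes Z}(L_A\otimes L_B)e^{i\theta_2 Z\otimes Z}$ and compute its Weyl-chamber coordinates directly, but this comes down to essentially the same calculation.
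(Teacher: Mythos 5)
Your proposal is correct, but note that the paper does not actually prove this lemma: it is imported wholesale from the reference [SoedaAkibue] (Soeda, Akibue, Murao 2014), so there is no in-paper argument to compare against line by line. Your route is the standard ``local-invariant in the magic basis'' one, and the pieces all check out: the easy inclusions follow, as you say, from commutativity of $X\otimes X$, $Y\otimes Y$, $Z\otimes Z$ and absorption of local unitaries into $\mathbf{U}_c$; for the crux, $\mathrm{tr}(\widetilde U^{T}\widetilde U)$ is invariant under local $SU(2)\otimes SU(2)$ multiplication (these become $SO(4)$ conjugations of $\widetilde U^{T}\widetilde U$), and on the canonical form its value is $4\cos 2x\cos 2y\cos 2z+4i\sin 2x\sin 2y\sin 2z$, whose imaginary part vanishes exactly on $\textsc{KC\#}\leq 2$ inside the Weyl chamber (there $0<z\leq y\leq x<\pi/2$ whenever $\textsc{KC\#}=3$, so all three sines are strictly positive). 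Your trace computation for $V_1V_2$ is also right: with $\widetilde V_j=O_j(\cos\phi_j\,\mathbb{I}+i\sin\phi_j\,D)O_j'$ one gets $\mathrm{tr}(\widetilde U^{T}\widetilde U)=\mathrm{tr}\bigl[(\cos 2\phi_1\,\mathbb{I}+i\sin 2\phi_1\,R_1)(\cos 2\phi_2\,\mathbb{I}+i\sin 2\phi_2\,R_2)\bigr]$ with $R_1,R_2$ real symmetric traceless involutions, and the cross terms die by $\mathrm{tr}(R_j)=0$ while $\mathrm{tr}(R_1R_2)$ is real. Two small points deserve explicit care if you write this up: (i) ``modulo global phase'' needs a word, since reality of $\mathrm{tr}(\widetilde U^{T}\widetilde U)$ is not phase-invariant in general; for $U\in SU(4)$ the Kraus--Cirac decomposition with $SU(2)$ locals holds only up to a phase in $\{\pm 1,\pm i\}$, whose square is $\pm 1$, so reality is in fact well defined — say so. (ii) When you force $V\in\mathbf{U}_c$ into this normal form, a stray factor $i$ turns $\cos\theta\,\mathbb{I}+i\sin\theta\,D$ into $i\cos\theta\,\mathbb{I}-\sin\theta\,D$, which you must re-absorb (e.g.\ multiply by $-D\in SO(4)$) to recover the claimed form; this is routine but should be stated. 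With those details filled in, your argument is a complete, self-contained proof, which is arguably a service to the reader compared with the paper's bare citation.
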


{\it Proof of Theorem 3.}
Since \textsc{KC\#}$(U)$ is less than or equal to $N$ if and only if $U$ can be decomposed into $N$ two-qubit controlled unitary operations as shown in Lemma \ref{lemma:KC1}, and $N$ two-qubit controlled unitary operations are deterministically implementable over $N$-bridge ladder network, Theorem 3 is straightforwardly shown.
\begin{flushright}
$\Box$
\end{flushright}

We also show that statements  i) and ii) of Theorem 2 are equivalent in the case of  the $(3,N)$-cluster networks in Appendix C. 

\section{Probabilistic implementation of unitary operations}

It is interesting to know whether there exists a task that is not achievable by classical network coding but the corresponding task is achievable in a quantum setting. We can give a negative result in the case of a $(2,2)$-cluster network.
There is no classical network coding protocol to  send single bits from $v_{1,1}$ to $v_{2,2}$ and from $v_{2,1}$ to $v_{1,2}$ over  a $(2,2)$-cluster network since there  is no butterfly, grail or identity substructure. 
This task corresponds to implementing a SWAP operation in quantum network coding.  
  By using Theorem 3, we see that a SWAP operation is not deterministically implementable over  a $(2,2)$-cluster network, which is a 2-bridge ladder network, since the Kraus-Cirac number of the SWAP operation is $3$.  
    
Unitary operations are deterministic maps by definition, but we consider the less restricted situation where the action of the unitary operations are implemented only when we can post-select the preferable probabilistic event.  This corresponds to requiring the  implementing the action of a unitary operation only when certain measurement outcomes in a LOCC protocol are probabilistically obtained.  A formal definition of the probabilistic implementation of a unitary operation is given by Definition 3 by changing LOCC to stochastic LOCC (SLOCC).  In this section, we first characterize all the unitary operations that are probabilistically implementable over cluster networks. Then, we show that a SWAP operation is not implementable even probabilistically.

\begin{theorem}
A $k$-qubit unitary operation $U$ is probabilistically implementable over the $(k,N)$-cluster network ($k\geq 2, N\geq 1$) if and only if the matrix representation of $U$ in terms of the computational basis $U^M$ can be decomposed into
\begin{equation}
U^M=F_1^MF_2^M\cdots F_{N}^M,
\label{eq:Sunitary}
\end{equation}
where each $F_i^M$ is a $2^k$ by $2^k$ complex matrix that can be decomposed in the same way as Eq.~(\ref{eq:dec})
\end{theorem}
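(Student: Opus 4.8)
The plan is to prove the two implications separately: necessity by the Choi-state and tensor-network argument already used for Theorem~2, and sufficiency by an explicit stochastic protocol.

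For necessity, suppose $U$ is probabilistically implementable over the $(k,N)$-cluster network, i.e.\ that there is an SLOCC protocol (Definition~3 with LOCC replaced by SLOCC) which on some post-selected branch realises $U$ with nonzero probability. As in the proof of Theorem~2 I would feed in the maximally entangled input $\ket{\mathbb{I}}=\otimes_{i=1}^{k}\ket{\Phi^+}_{I_i,I'_i}$; on the fixed post-selected branch the total action is a single product Kraus operator $\otimes_{i=1}^{k}\otimes_{j=1}^{N}A_{i,j}$, and successful implementation forces $\otimes_{i,j}A_{i,j}\ket{\Phi}_{\mathcal{R}}=\sqrt{p}\,\ket{U}$ with $p>0$, analogously to \eqref{eq:LOCC1}. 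Then I would run the same tensor-network contraction as in Theorem~2: expand each $A_{i,j}$ in the computational basis of its vertical Bell-pair halves and contract the $k-1$ vertical Bell pairs of column $j$. Because those Bell pairs form a dimension-two chain along $v_{1,j}-v_{2,j}-\cdots-v_{k,j}$, the contracted column collapses to a single $2^k\times 2^k$ operator which is exactly of the bond-dimension-two form \eqref{eq:dec}. Composing these $N$ column operators along the remaining horizontal Bell pairs is nothing but the contraction of the whole network, which equals $\sqrt{p}\,\ket{U}$; reshaping $\ket{U}$ back to $U^M$ gives $U^M=F_1^M\cdots F_N^M$ with each $F_i^M$ of the form \eqref{eq:dec}, up to a nonzero scalar and inessential transpositions of individual factors (transposition preserves the form \eqref{eq:dec}). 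The one feature that differs from Theorem~2 is that here no trace-preservation (closure) relation is available, so Lemma~\ref{lemma:subunitary} cannot be invoked and the $F_i^M$ need not be multiples of unitaries --- but this is exactly the weaker conclusion claimed; and since unitarity of the factors is not needed, no factor has to be inverted and the decomposition is read straight off the contraction.

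For sufficiency, suppose $U^M=F_1^M\cdots F_N^M$ with each $F_i^M$ of the form \eqref{eq:dec}. I would build an SLOCC protocol acting column by column: on the vertical slice $\mathcal{V}^v_j$ it applies the appropriate factor (in the order making the composition equal $U^M$) to the $k$ working qubits sitting on that slice, using only the $k-1$ vertical Bell pairs $\mathcal{S}_j$, and then teleports each working qubit to the next slice through the horizontal Bell pair in $\mathcal{K}_i$. The heart of the argument is the lemma that any operator of the form \eqref{eq:dec} is implementable with strictly positive probability over a single vertical slice using the $k-1$ Bell pairs of $\mathcal{S}_j$. I would prove it by reading \eqref{eq:dec} as a bond-dimension-two matrix product operator along the chain $v_{1,j}-\cdots-v_{k,j}$ and implementing it sequentially: each node $v_{i,j}$ performs a sub-normalised local operation (legitimate in SLOCC) that reads the two-dimensional ``bond'' value $a_{i-1}$ off the Bell-pair half it shares with $v_{i-1,j}$, applies the corresponding factor $E^{(a_{i-1},a_i)}_i$ (with the obvious conventions at the two ends of the chain) to its working qubit, and forwards the new bond value $a_i$ coherently onto the Bell-pair half it shares with $v_{i+1,j}$; once the chain has been traversed, every node measures its spent Bell-pair halves in the $\{\ket{+},\ket{-}\}$ basis and post-selects the all-$+$ outcome, which coherently sums over all $\{a_1,\dots,a_{k-1}\}$ and so reproduces exactly the sum appearing in \eqref{eq:dec} (a computational-basis measurement would instead collapse it). Each local operation is a sub-normalised completely positive map succeeding with a fixed positive probability, so the slice operation, and hence the whole protocol, succeeds with positive probability. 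On the all-success branch the composed operation on the input qubits is $c\,F_1^M\cdots F_N^M=c\,U^M$ with $c\neq 0$, and since $U$ is unitary this normalises to $U\ket{\psi}$ for every input $\ket{\psi}$ with one and the same success probability $|c|^2$; hence $U$ is probabilistically implementable.

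The main obstacle is the sufficiency lemma. One has to recognise that stochastic post-selection is precisely what enlarges the class of operators implementable over a single vertical slice --- from the deterministic, unitary operators produced by the conversion method of Section~IV --- to all bond-dimension-two operators of the form \eqref{eq:dec}, including non-unitary ones, and one has to arrange the ``forward a coherent bond, then measure it out in the $X$ basis'' construction so that it reconstructs the full coherent sum in \eqref{eq:dec} rather than collapsing it, while checking that the non-unitarity of the intermediate node operations costs only a finite probability factor and does not spoil uniformity over input states (it does not, because the successful-branch Kraus operator is manifestly input-independent). By contrast the necessity direction is, given the bookkeeping already done for Theorem~2, a routine contraction once one notices that the unitarity of the factors --- the property for which Lemma~\ref{lemma:subunitary} was needed there --- plays no role.
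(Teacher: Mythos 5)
Your proposal is correct, and its core is the same as the paper's: for necessity you use exactly the paper's reduction — apply the protocol to halves of maximally entangled pairs, note that a successful SLOCC branch realizing the pure Choi state must contain a single product Kraus operator with $\otimes_{i,j}A_{i,j}\ket{\Phi}_{\mathcal{R}}=\sqrt{p}\,\ket{U}$ as in Eq.~\eqref{eq:SLOCC1}, and then contract column by column with the same bookkeeping as in Theorem~2, correctly observing that Lemma~\ref{lemma:subunitary} (and hence unitarity of the factors) is neither available nor needed here. For sufficiency the paper is much terser: it asserts that Eq.~\eqref{eq:SLOCC1} is equivalent to the SLOCC condition and that the correspondence between the $A_{i,j}$ and the $2\times 2$ factors runs in both directions, whereas you construct an explicit column-by-column SLOCC protocol, implementing each bond-dimension-two factor of the form \eqref{eq:dec} on a vertical slice by subnormalized local operations that propagate the bond index through the vertical Bell pairs and erase it coherently by an $X$-basis post-selection, then teleporting through the horizontal Bell pairs to the next column. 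This is not a genuinely different argument so much as an operational unpacking of the product operators whose existence the paper invokes implicitly, and it is a valid, somewhat more self-contained way to discharge the ``if'' direction; your remarks that nonzero scalars and factor-wise transpositions can be absorbed into the form \eqref{eq:dec}, and that the success probability is input-independent because the successful-branch Kraus operator is proportional to the unitary $U$, correctly handle the points the paper leaves unstated.
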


\begin{proof}
Similar to the case of deterministic implementation, we consider applying $U \in\mathbf{U}(\mathcal{H}_{\mathcal{I}_Q}:\mathcal{H}_{\mathcal{O}_Q})$ on a part of $k$ maximally entangled states $\ket{\mathbb{I}} \in \mathcal{H}_{\mathcal{I}_Q} \otimes \mathcal{H}_{\mathcal{I}'_Q}$.  Then $U$ is probabilistically implementable over the $(k,N)$-cluster network ($k\geq 2, N\geq 1$) if and only if there exist a stochastic LOCC (SLOCC) map $\Gamma''$ and non-zero probability $p>0$ such that
\begin{equation}
\Gamma''(\ket{\Phi}\bra{\Phi}_{\mathcal{R}})=p\ket{U}\bra{U},
\label{eq:SLOCC}
\end{equation}
where $\ket{\Phi}_{\mathcal{R}}$ is the resource state of the $(k,N)$-cluster network and $\ket{U} \in \mathcal{H}_{\mathcal{O}_Q} \otimes \mathcal{H}_{\mathcal{I}'_Q}$ is defined by Eq.~(\ref{defUket}).
Eq.~\eqref{eq:SLOCC} is equivalent to the statement that there exist a set of  linear operators $\{A_{i,j}\}$ and non-zero probability $p>0$ such that
\begin{equation}
\label{eq:SLOCC1}
\otimes_{i=1}^k\otimes_{j=1}^N A_{i,j}\ket{\Phi}_{\mathcal{R}}=\sqrt{p}\ket{U}.
\end{equation}
The conditions of $\{A_{i,j}\}$ given by Eq.~\eqref{eq:SLOCC1} is similar to the conditions of Kraus operators $\{A_{i,j}^m\}_m$ given by Eq.~\eqref{eq:LOCC1} presented in the proof of Theorem 2. The index $m$ is dropped in Eq.~\eqref{eq:SLOCC1} since the map we consider is SLOCC instead of LOCC considered in Theorem 2. By taking the correspondence between $A_{i,j}$ and $A_{i,j}^m$, we obtain a decomposition of the form presented in Eq.~\eqref{eq:Sunitary}.

\end{proof}

\begin{lemma}
A SWAP operation $U_{swap}  := |00\rangle\langle 00|+|01\rangle\langle 10|+|10\rangle\langle 01|+|11\rangle\langle 11|$ is not probabilistically implementable over the $2$-bridge ladder network.
\end{lemma}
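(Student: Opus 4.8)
The plan is to reduce, via the preceding theorem, to a purely algebraic non-factorization statement and then dispose of it with a rank argument. Since the $2$-bridge ladder network is the $(2,2)$-cluster network, the characterization of probabilistic implementability applies with $k=N=2$, so it suffices to show that $U_{swap}^M$ admits no factorization $U_{swap}^M=F_1^MF_2^M$ in which each $F_i^M$ has the shape of Eq.~(\ref{eq:dec}) --- equivalently, that $U_{swap}$ is not a product $F_1F_2$ of two operators on $\mathbb{C}^2\otimes\mathbb{C}^2$ with $\textsc{Op\#}_B^A(F_i)\le 2$. (For deterministic implementation this is already excluded by Theorem~\ref{theorem:main} since $\textsc{KC\#}(U_{swap})=3$; the content of the lemma is that allowing the two factors to be non-unitary does not help.) First I would note that both $F_i$ are necessarily invertible, because $\det F_1\cdot\det F_2=\det U_{swap}\neq 0$, and that one may assume $\textsc{Op\#}_B^A(F_i)=2$ exactly: an operator-Schmidt-rank-$1$ factor is a product operator, which would force $\textsc{Op\#}_B^A(F_1F_2)\le 2<4=\textsc{Op\#}_B^A(U_{swap})$.

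The main ingredient is a normal form: every invertible $F\in\mathbf{L}(\mathbb{C}^2\otimes\mathbb{C}^2)$ with $\textsc{Op\#}_B^A(F)\le 2$ can be written as $F=(P\otimes\mathbb{I})\,G\,(Q\otimes\mathbb{I})$ with $P,Q$ invertible single-qubit operators and
\[
G=\ket{0}\bra{0}\otimes E_{00}+\ket{0}\bra{1}\otimes E_{01}+\ket{1}\bra{1}\otimes E_{11},\qquad E_{00},E_{11}\ \text{invertible},
\]
i.e.\ after stripping off local operations on the first qubit, $F$ is block upper triangular with invertible diagonal blocks. To prove this, write $F=A_0\otimes B_0+A_1\otimes B_1$ and look at the plane $\mathcal{A}=\mathrm{span}\{A_0,A_1\}\subseteq\mathbf{L}(\mathbb{C}^2)$. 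The restriction of $\det$ to $\mathcal{A}$ is a binary quadratic form which cannot vanish identically --- otherwise every element of $\mathcal{A}$ would be singular, hence the $A_i$ would share a common range or a common kernel, making $F$ non-invertible --- so over $\mathbb{C}$ it has either two distinct zero lines or one double zero line. In the first case one chooses a basis of $\mathcal{A}$ consisting of two rank-one operators $\ket{p_a}\bra{q_a}$ with $\{p_a\}$, $\{q_a\}$ each linearly independent, which yields a block-\emph{diagonal} $G=\sum_a\ket{a}\bra{a}\otimes C_a$; in the second case one picks an invertible $A_1'\in\mathcal{A}$ and puts $A_1'^{-1}A_*$ (with $A_*$ spanning the unique rank-one line) into Jordan form, which yields the block-upper-triangular $G$ above. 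Invertibility of $F$ then forces the diagonal blocks to be invertible.

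Now substitute $F_i=(P_i\otimes\mathbb{I})G_i(Q_i\otimes\mathbb{I})$ into $U_{swap}=F_1F_2$ and move the outermost local operations to the other side, using $U_{swap}(X\otimes\mathbb{I})=(\mathbb{I}\otimes X)U_{swap}$, to get $G_1(R\otimes\mathbb{I})G_2=(P_1^{-1}\otimes Q_2^{-1})U_{swap}$ with $R:=Q_1P_2$ invertible. Apply to both sides the linear map $\Lambda(X):=(\bra{1}\otimes\mathbb{I})\,X\,(\ket{0}\otimes\mathbb{I})$, which returns an operator on the second qubit. Writing $U_{swap}=\sum_{p,q}\ket{q}\bra{p}\otimes\ket{p}\bra{q}$, the right-hand side becomes the outer product $(Q_2^{-1}\ket{0})(\bra{1}P_1^{-1})$, which is nonzero (both factors invertible) and hence of rank exactly $1$. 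On the left-hand side, expanding $G_1(R\otimes\mathbb{I})G_2$ shows that its $\ket{1}\bra{0}$-block equals $\sum_{b,d}R_{bd}\,E^{(1)}_{1b}E^{(2)}_{d0}$; since $G_1$ and $G_2$ are block upper triangular their lower-left blocks vanish ($E^{(1)}_{10}=E^{(2)}_{10}=0$), so only the term $b=1,d=0$ survives and $\Lambda$ of the left-hand side is $R_{10}\,E^{(1)}_{11}E^{(2)}_{00}$. But $E^{(1)}_{11}E^{(2)}_{00}$ is a product of invertible matrices, so this operator has rank $2$ when $R_{10}\neq 0$ and rank $0$ when $R_{10}=0$ --- never rank $1$. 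This contradiction establishes that no such factorization exists, proving the lemma.

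I expect the normal-form lemma to be the only delicate step: one must handle the ``degenerate'' operators whose first-factor plane $\mathcal{A}$ has a repeated $\det$-root (so the associated controlled structure is a Jordan block rather than diagonalizable), checking that they too reduce to block-upper-triangular form with invertible diagonal blocks and that the reducing local operations can be taken on the first qubit only. Once that normal form is available the rank comparison above is essentially immediate. A less streamlined alternative would skip the lemma and split into cases according to whether each $F_i$ is locally equivalent to a diagonal controlled operator or to a Jordan-type one, but the unified normal form seems cleaner.
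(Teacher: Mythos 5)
Your proposal is correct, and it proves the lemma by a genuinely different route than the paper. Both arguments start the same way: reduce via Theorem 4 (with $k=N=2$) to the statement that $U_{swap}$ cannot be written as a product $F_1F_2$ of two invertible operators each of operator Schmidt rank at most $2$. From there the paper proceeds by invoking the fact that the inverse of an invertible operator of operator Schmidt rank $2$ again has operator Schmidt rank $2$ (Tyrtyshnikov), recasting the factorization as the existence of a four-qubit pure state with Schmidt ranks $\{4,2,2\}$ across the three bipartitions, and ruling that out by a case-by-case analysis of the nine SLOCC families of four-qubit states (Appendix E). You instead prove a normal form: every invertible two-qubit operator of operator Schmidt rank at most $2$ equals $(P\otimes\mathbb{I})G(Q\otimes\mathbb{I})$ with $P,Q$ invertible on the first qubit and $G$ block upper triangular with invertible diagonal blocks --- a non-unitary analogue of the Cohen--Yu result the paper uses only for unitaries in Appendix C --- established by analyzing the determinant as a binary quadratic form on the first-factor pencil and treating the double-root case with a nilpotent Jordan block; then the swap relation $U_{swap}(X\otimes\mathbb{I})=(\mathbb{I}\otimes X)U_{swap}$ and a comparison of the off-diagonal block $R_{10}E^{(1)}_{11}E^{(2)}_{00}$ (rank $0$ or $2$) with the rank-one outer product on the other side give an immediate contradiction. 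I checked the delicate points: the pencil cannot consist entirely of singular matrices without killing invertibility of $F$, the two-distinct-roots case forces linearly independent range and kernel vectors (hence block-diagonal $G$), the double root forces a nilpotent $A_1'^{-1}A_*$ (hence the upper-triangular $G$ with equal invertible diagonal blocks), and the block bookkeeping in the final step is right. Your approach is more elementary and self-contained, avoiding both the external rank-of-inverse result and the heavy four-qubit classification; what the paper's route buys in exchange is the standalone four-qubit statement (no state with Schmidt ranks $\{4,2,2\}$), which may be of independent interest.
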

\begin{proof}
By using Theorem 4, the SWAP operation is probabilistically implementable over the $(2,2)$-cluster network ($2$-bridge ladder network) if and only if there exist linear operations $P,Q\in\mathbf{L}(\mathcal{H}_1\otimes\mathcal{H}_2)$ and $E_{i,j}^{(k)}\in\mathbf{L}(\mathcal{H}_i)$ satisfying
\begin{eqnarray}
U_{swap}&=&PQ,\\
P&=&E_{1,1}^{(0)}\otimes E_{2,1}^{(0)}+E_{1,1}^{(1)}\otimes E_{2,1}^{(1)},\label{eq:opsch1}\\
Q&=&E_{1,2}^{(0)}\otimes E_{2,2}^{(0)}+E_{1,2}^{(1)}\otimes E_{2,2}^{(1)},\label{eq:opsch2}
\end{eqnarray}
where $\mathcal{H}_i=\mathbb{C}^2$.
Since $P$ and $Q$ can be decomposed into Eq.\eqref{eq:opsch1} and Eq.\eqref{eq:opsch2}, we can derive
\begin{eqnarray}
\textsc{Op\#}_1^2(P)&\leq& 2,\\
\textsc{Op\#}_1^2(Q)&\leq& 2.
\end{eqnarray}
Since $\textsc{Op\#}_1^2 (U_{swap})=4$, $\textsc{Op\#}_1^2(P)=\textsc{Op\#}_1^2(Q)=2$. In \cite{inverseSchmidtrank}, it is shown that if $\textsc{Op\#}_1^2(P)=2$ and $P$ is invertible, $\textsc{Op\#}_1^2(P^{-1})=2$. Thus, the SWAP operation is probabilistically implementable if and only if there exist linear operations $P,Q\in\mathbf{L}(\mathcal{H}_1\otimes\mathcal{H}_2)$ satisfying
\begin{eqnarray}
Q=U_{swap}P,\\
\textsc{Op\#}_1^2(P)= 2,\,\,{\rm rank}(P)=4\\
\textsc{Op\#}_1^2(Q)= 2,\,\,{\rm rank}(Q)=4.
\end{eqnarray}
In general, we can regard $P$ as a matrix representation of a four qubit pure state $\ket{\Phi}_{1,2,3,4}$;
\begin{equation}
P=\sum_{i=1}^4\bra{i}_{3,4}\ket{\Phi}_{1,2,3,4}\bra{i}_{1,2}.
\end{equation}
Then, the following correspondences are obtained,
\begin{eqnarray}
{\rm rank}(P)=4 &\Leftrightarrow&\textsc{Sch\#}_{1,2}^{3,4}(\ket{\Phi})=4,
\label{eq:4qubit1}\\
\textsc{Op\#}_1^2(P)=2 &\Leftrightarrow&\textsc{Sch\#}_{1,3}^{2,4}(\ket{\Phi})=2,
\label{eq:4qubit2}\\
\textsc{Op\#}_1^2(U_{swap}P)=2 &\Leftrightarrow&\textsc{Sch\#}_{1,4}^{2,3}(\ket{\Phi})=2,
\label{eq:4qubit3}
\end{eqnarray}
where $\textsc{Sch\#}_{1,2}^{3,4}(\ket{\Phi})$ is a Schmidt number in terms of a  partition between qubit $1,2$ and qubit $3,4$. We show that there is no four qubit state simultaneously satisfying Eqs. \eqref{eq:4qubit1}, \eqref{eq:4qubit2}, and \eqref{eq:4qubit3} in Appendix E.
\end{proof}

We can apply Theorem 2 and 4 to a slightly extended cluster network, a cluster network with loops.  We show the definition in Appendix F.

\section{Concluding remarks}

We have investigated the implementability of $k$-qubit unitary operations over $(k,N)$-cluster networks to apply the idea of network coding for distributed quantum computation where  the inputs and outputs of quantum computation are given in all separated nodes and quantum communication between nodes is restricted.   We have presented a method to obtain quantum circuit representations of unitary operations implementable over a given cluster network.   For the $(k,N)$-cluster networks of $k=2,3$, we have shown that our method provides all implementable unitary operations over the cluster network.   The proof is based on the existence of the standard form of the converted quantum circuit and the equivalence of a set of unitary operations represented by the standard form and decomposed into the form given by Eq.~\eqref{eq:dec}.  The proof also suggests that statements i) and ii) of Theorem 2 are equivalent for $k=2,3$. For $k\geq4$, whether our method provides all implementable unitary operations or not is still an open problem since the standard form is not known.

As a first step to  finding the fundamental primitive networks of network coding for quantum settings, we have shown that both of the butterfly and grail networks are sufficient resources for implementing arbitrary two-qubit unitary operations, meanwhile the $(2,2)$-cluster network is not sufficient to implement arbitrary two-qubit unitary operations even probabilistically.  To prove this, we have shown necessary and sufficient conditions of probabilistically implementable unitary operations presented in Theorem 4. There are two differences in Theorem 2 and Theorem 4.  First, we have shown that $U$ can be decomposed into a particular form represented by Eq.~\eqref{eq:udec} {\it if} $U$ is deterministically implementable in Theorem 2 and that $U$ can be decomposed into a particular form represented by Eq.~\eqref{eq:Sunitary} {\it if and only if} $U$ is probabilistically implementable in Theorem 4.  Second, each factor $F_i^M$ in Eq.~\eqref{eq:Sunitary} can be a non-unitary complex matrix while each factor $V_i^M$ in Eq.~\eqref{eq:udec} must be a unitary matrix.   The existence of unitary operations only probabilistically implementable (with less than unit probability) is also left as an open question.

\section*{Acknowledgment}
We acknowledge H. Katsura, F. Le Gall and L. Yu for useful comments.  This work is supported by the Project for Developing Innovation Systems of MEXT, Japan and JSPS by KAKENHI (Grant No. 23540463, No. 23240001, No. 26330006, 15H01677, 16H01050).  We also acknowledge the ELC project (Grant-in-Aid for Scientific Research on Innovative Areas MEXT KAKENHI (Grant No. 24106009)) for encouraging this research.

\newpage
\appendices
\section{LOCC implementation of converted quantum circuits}
We have shown a protocol to implement a three-qubit fully controlled unitary operation in a set of vertically aligned nodes $\mathcal{V}_j^v$. In some cases, we can implement more than one three-qubit or two-qubit controlled unitary operations {\it in parallel} using the same resource.  We show how a sequence of controlled unitary operations represented by converted circuits can be implemented by LOCC assisted by the resource state given by a collection of $(k-1)$ Bell pairs corresponding to a set of vertically aligned edges $\mathcal{S}_j$ in this appendix.

We introduce a new notation for controlled unitary operations for simplifying and unifying descriptions of two-qubit and three-qubit controlled unitary operations.
We represent a two-qubit controlled unitary operation that is controlled by the $i$-th qubit and targets  the $j$-th qubit as
\begin{equation}
(i,i;j),
\end{equation}
and a three-qubit fully controlled unitary operation that is controlled by the $i$-th and $j$-th qubit and targets the $k$-th qubit as
\begin{equation}
(i,j;k).
\end{equation}
Note that we represented $(i,i;j)$ as $C_{i;j}$ and $(i,j;k)$ as $C_{i,j;k}$ in the previous sections.
Let $G=\{g_n\}$ be a sequence of controlled unitary operations that is added in step 2 of the conversion protocol.
For example, the converted circuit represented by Fig.~\ref{fig:circuitsam} is described by a sequence
\begin{eqnarray}
g_1&=&(1,1;2)
\label{g1}
\\
g_2&=&(4,4;2)\\
g_3&=&(1,4;2)\\
g_4&=&(4,4;5)\\
g_5&=&(4,4;3)\\
g_6&=&(5,5;6)\\
g_7&=&(4,4;5)
\label{g7}.
\end{eqnarray}

\begin{figure}
\begin{center}
  \includegraphics[height=.18\textheight]{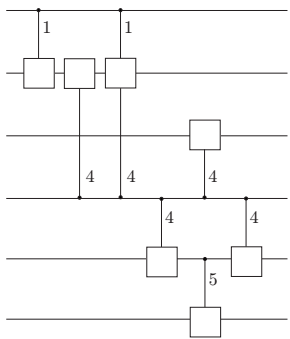}
  \end{center}
  \caption{An example of a converted quantum circuit obtained by step 2 of the conversion protocol.}
\label{fig:circuitsam}
\end{figure}

Let $ \mathcal{C}_i$ be a set of controlled unitary operations that is controlled the $i$-th qubit:
\begin{equation}
 \mathcal{C}_i=\{(a,b;c)\in G;\, a=i\vee b=i\}.
\end{equation}

For example, for $G$ defined by Eqs.~(\ref{g1})-(\ref{g7}),
\begin{eqnarray}
 \mathcal{C}_1=\{g_1,g_3\}
\label{c1}\\
 \mathcal{C}_4=\{g_2,g_3,g_4,g_5,g_7\}\\
\mathcal{C}_5=\{g_6\}\\
 \mathcal{C}_2=C_3=C_6=\emptyset.
\label{c2}
\end{eqnarray}

Define the {\it range} of $ \mathcal{C}_i\neq \emptyset$ as
\begin{eqnarray}
{\rm range}(\mathcal{C}_i)&=&(\min\{i,\min_{c}\{(a,b;c)\in \mathcal{C}_i\}\},\nonumber\\
&& \max\{i,\max_{c}\{(a,b;c)\in  \mathcal{C}_i\}\}).
\end{eqnarray}

For example,  for $\mathcal{C}_i$ defined by Eqs.~(\ref{c1})-(\ref{c2}),
\begin{eqnarray}
{\rm range}(\mathcal{C}_1)&=&(1,2)\\
{\rm range}(\mathcal{C}_4)&=&(2,5)\\
{\rm range}(\mathcal{C}_5)&=&(5,6).
\end{eqnarray}
 
 All the controlled unitary operations in $G$ are implementable by using the following protocol.

The protocol for implementing a sequence of controlled unitary operation in $G$:
\begin{enumerate}
\item For applying gates in $\mathcal{C}_i$, we create an ancillary qubit state entangled to the $i$-th qubit state by preparing an ancillary qubit in $\ket{0}$ and applying a controlled-NOT operation where the ancillary qubit is the target qubit of a controlled-NOT operation.  Then the ancillary qubit state is sent from the $i$-th node $v_{i,j} \in \mathcal{V}_j^v$ to the target node by using teleportation.  If several different target qubits are included in $\mathcal{C}_i$, create another ancillary qubit by the same method at a target node, keep one of the ancillary qubits at the target node and send the other to the next target node.   We consume $n_i$ Bell pairs to teleport ancillary qubit states to the target nodes, where $n_i=b-a$ and ${\rm range}(\mathcal{C}_i)=(a,b)$. Since there is no overlap between ranges of $\mathcal{C}_i$ and there is no target unitary operation inserted  between control qubits, we can teleport all the ancillary qubit states entangled to the control states to all the target nodes by just consuming $(k-1)$ Bell pairs. 
\item
We apply all the controlled unitary operations in $G$ in the target nodes by using the teleported ancillary qubit states entangled to the control qubit states as the control qubits.   
\item
We decouple the ancillary qubit states by performing the projective measurements on the ancillary qubits 
 in the target nodes and apply correction unitary operations in the control nodes depending on the measurement outcomes.
\end{enumerate}

\section{Converted circuit of $(2,N)$ and $(3,N)$-cluster network}

First, we prove that any converted circuits of a $(2,N)$-cluster network can be simulated by a circuit consisting of a sequence of $N$  two-qubit unitary operations and local unitary operations.  In this case, 
only two-qubit unitary operations $(1,1;2)$ or $(2,2;1)$ can be added in step 2 of the conversion protocol. Since applying the gate $(1,1;2)$  sequentially for $k\in\mathbb{N}$ times can be simulated by just one use of gate $(1,1;2)$ and gate $(2,2;1)$ can be simulated by one use of gate $(1,1;2)$ and additional local unitary operations, any circuits generated in step 2 of the conversion protocol can be simulated by one use of $(1,1;2)$ and local unitary operations.

Next, we prove that any converted circuits of  a $(3,N)$-cluster network can be simulated by the circuit of a sequence of $N$  three-qubit fully controlled unitary operations given in the form of
\begin{eqnarray}
\ket{00}\bra{00}_{1,3}\otimes u^{(00)}_2+\ket{01}\bra{01}_{1,3}\otimes u^{(01)}_2\nonumber\\
+\ket{10}\bra{10}_{1,3}\otimes u^{(10)}_2+\ket{11}\bra{11}_{1,3}\otimes u^{(11)}_2
\end{eqnarray}
and local unitary operations.
In step 2 of the conversion protocol, every converted circuits can be simulated by six classes of circuits shown in Fig.~12.

\begin{figure}
\begin{center}
  \includegraphics[height=.23\textheight]{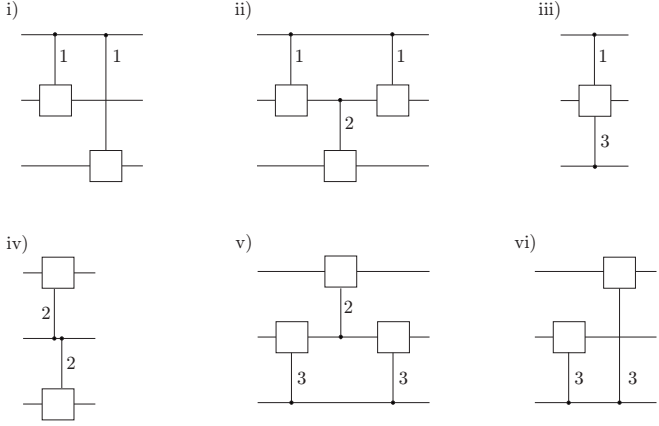}
  \end{center}
  \caption{The six classes of converted quantum circuits obtained by step 2 of the conversion protocol of a $(3,N)$-cluster network.}
\label{fig:convertedcircuit(3,N)}
\end{figure}

In the following, we show that all of these six classes (from class i) to class vi) represented in Fig.~12) can be simulated by a three-qubit fully controlled unitary operation and local unitary operations by investigating each class.
\begin{itemize}
\item[i)] A unitary operation obtained by circuit i) is given by
\begin{eqnarray}
\ket{0}\bra{0}_1\otimes u_2^{(0)}\otimes u_3^{(0)}+\ket{1}\bra{1}_1\otimes u_2^{(1)}\otimes u_3^{(1)}\nonumber\\
\overset{\mathrm{LU}}{=} \ket{0}\bra{0}_1\otimes \mathbb{I}_2\otimes \mathbb{I}_3+\ket{1}\bra{1}_1\otimes u_2^{(1)}u_2^{(0)\dag}\otimes u_3^{(1)}u_3^{(0)\dag}\label{eq:LU1}
\end{eqnarray}
where $u_j^{(i)}$ is a one-qubit unitary operation and $\overset{\mathrm{LU}}{=} $ represents local unitary equivalence. Diagonalize $u_2^{(1)}u_2^{(0)\dag}$ and $u_3^{(1)}u_3^{(0)\dag}$ as
\begin{eqnarray}
u_2^{(1)}u_2^{(0)\dag}&=&v_2 \begin{pmatrix}e^{i\theta_1}&0\\0&e^{i\theta_2}\end{pmatrix}v_2^{\dag}\\
u_3^{(1)}u_3^{(0)\dag}&=&v_3 \begin{pmatrix}e^{i\theta_3}&0\\0&e^{i\theta_4}\end{pmatrix}v_3^{\dag}.
\end{eqnarray}
Since the right-hand side of Eq.~\eqref{eq:LU1} is locally unitarily equivalent to a diagonal unitary operation in  the computational basis, this circuit can be simulated by a three-qubit fully controlled unitary operation and local unitary operations.

\item[ii)]  In circuit ii), the two-qubit controlled unitary operation $(2,2;3)$ can be decomposed into
\begin{eqnarray}
&\ket{0}\bra{0}_2\otimes u_3^{(0)}+\ket{1}\bra{1}_2\otimes u_3^{(1)}\nonumber\\
=&v_3\left(\ket{0}\bra{0}_2\otimes \mathbb{I}_3+\ket{1}\bra{1}_2\otimes \begin{pmatrix}e^{i\theta_1}&0\\0&e^{i\theta_2}\end{pmatrix}\right)v_3^{\dag}u_3^{(0)}\nonumber\\
=&(\mathbb{I}_2\otimes v_3)\nonumber\\
&\left(\begin{pmatrix}1&0\\0&e^{i\theta_1}\end{pmatrix}\otimes\ket{0}\bra{0}_3+\begin{pmatrix}1&0\\0&e^{i\theta_2}\end{pmatrix}\otimes\ket{1}\bra{1}_3\right)\nonumber\\
&(\mathbb{I}_2\otimes v_3^{\dag}u_3^{(0)}),
\end{eqnarray}
where $v_3$ is a unitary operation that diagonalizes $u_3^{(1)}u_3^{(0)\dag}$.
Thus, this circuit is locally unitarily equivalent to a three-qubit fully controlled unitary operation.

\item[iii)] Circuit iii) consists of just a three-qubit fully controlled unitary operation.

\item[iv)] Circuit iv) can be simulated by a three-qubit fully controlled unitary operation and local unitary operations since we can diagonalize a unitary operation obtained by the circuit in the same way as circuit i).

\item[v)] In the same way as circuit ii),  circuit v) is locally unitarily equivalent to a three-qubit fully controlled unitary operation.

\item[vi)] In the same way as circuit i), circuit vi)  is locally unitarily equivalent to a three-qubit fully controlled unitary operation.
\end{itemize}

\section{Two conditions in Theorem 2 are equivalent in the case of the $(3,N)$-cluster networks}

For $k=3$,
the $2^k$ by $2^k$ unitary matrix $V_i^M$ in Theorem 2 is written by
\begin{eqnarray}
V_i^M&=&E_{1,i}^{(0)}\otimes E_{2,i}^{(0,0)}\otimes E_{3,i}^{(0)}\nonumber\\
&&+E_{1,i}^{(0)}\otimes E_{2,i}^{(0,1)}\otimes E_{3,i}^{(1)}\nonumber\\
&&+E_{1,i}^{(1)}\otimes E_{2,i}^{(1,0)}\otimes E_{3,i}^{(0)}\nonumber\\
&&+E_{1,i}^{(1)}\otimes E_{2,i}^{(1,1)}\otimes E_{3,i}^{(1)}.
\end{eqnarray}

By using the result on local unitary equivalence of unitary operations with operator Schmidt rank 2 obtained by Cohen and Yu \cite{Cohen} (Theorem 1 of \cite{Cohen} ), we have
\begin{eqnarray}
V_i^M&\overset{\mathrm{LU}}{=}&\ket{0}\bra{0}_A\otimes W_{BC}^{(0)}+\ket{1}\bra{1}_A\otimes W_{BC}^{(1)}\\
&=&W_{AB}^{(0)}\otimes \ket{0}\bra{0}_C+W_{AB}^{(1)}\otimes\ket{1}\bra{1}_C,
\end{eqnarray}
where $W_{BC}^{(i)}$ and $W_{AB}^{(i)}$ are unitary matrices, $\overset{\mathrm{LU}}{=}$ represents a locally unitarily equivalence and we identify a three-qubit unitary operation on $\mathcal{H}_A\otimes\mathcal{H}_B\otimes\mathcal{H}_C$ as its matrix representation $V_i^M$.
Thus, it is shown that
\begin{eqnarray}
V_i^M&\overset{\mathrm{LU}}{=}&\ket{00}\bra{00}_{AC}\otimes W_{B}^{(00)}+\ket{01}\bra{01}_{AC}\otimes W_{B}^{(01)}\nonumber\\
&&+\ket{10}\bra{10}_{AC}\otimes W_{B}^{(10)}+\ket{11}\bra{11}_{AC}\otimes W_{B}^{(11)},\nonumber\\
\end{eqnarray}
where $W_B^{(ij)}$ is a $2$ by $2$ unitary matrix.
Statements i) and ii) of Theorem 2 of the main text are equivalent in the case of $(3,N)$-cluster networks since $V_i^M$ is a fully controlled  three-qubit unitary operation and $N$ fully controlled three-qubit unitary operations are implementable by a converted circuit of the $(3,N)$-cluster networks.

\section{A network coding protocol for the butterfly network implementing arbitrary two-qubit unitary operations}
We show that the quantum circuit presented in Fig.~\ref{fig:butterflycircuit}
implements a two-qubit global unitary $U_{global}(x,y,z)$ given by Eq.\eqref{gloablpartofKCD}  for arbitrary parameters $x, y, z \in \mathbb{R}$.
$U_{global}(x,y,z)$ can be decomposed into
\begin{eqnarray}
U_{global}(x,y,z)=\sum_j \lambda_j\ket{\Psi^{(j)}}\bra{\Psi^{(j)}}
\end{eqnarray}
by using its eigenvalues $\{\lambda_j\}_j$ and eigenvectors $\{\ket{\Psi^{(j)}}\}_j$ such that
\begin{eqnarray}
\lambda_0=e^{i(x-y+z)},\lambda_1=e^{i(-x+y+z)},\\\lambda_2=e^{i(x+y-z)},\lambda_3=e^{i(-x-y-z)},
\end{eqnarray}
\begin{eqnarray}
\ket{\Psi^{(0)}}&=&\frac{1}{\sqrt{2}}(\ket{00}+\ket{11}),\\
\ket{\Psi^{(1)}}&=&\frac{1}{\sqrt{2}}(\ket{00}-\ket{11}),\\
\ket{\Psi^{(2)}}&=&\frac{1}{\sqrt{2}}(\ket{01}+\ket{10}),\\
\ket{\Psi^{(3)}}&=&\frac{1}{\sqrt{2}}(\ket{01}-\ket{10}).
\end{eqnarray}
Thus, in order to show an arbitrary input state $\ket{\Psi}$ is transformed into $U_{global}(x,y,z)\ket{\Psi}$ through the quantum circuit, it is sufficient to show that the eigenvectors $\{\ket{\Psi^{(j)}}\}_j$ are transformed into $\{\lambda_j\ket{\Psi^{(j)}}\}_j$ and when a measurement is performed, the probability of obtaining a measurement outcome must be independent of the eigenvectors not to break coherence between the eigenvectors.

We divide the quantum circuit into seven steps from step (i) to step (vii) as shown in Fig.~\ref{fig:butterflycircuit1}.  We show the detail of how the eigenvectors are transformed after each step.

First, we prepare a three-qubit input state
\begin{equation}
\ket{\Psi^{(j)}}_{1,3}\ket{0}_2
\end{equation}
in step (i), where we denote the index of the qubit corresponding to the first horizontal wire as 1 and that of the others likewise. After applying Hadamard gates in step (ii), we obtain
\begin{equation}
H_1 H_3\ket{\Psi^{(j)}}_{1,3}\ket{+}_2,
\end{equation}
where $\ket{\pm}=\frac{1}{\sqrt{2}}(\ket{0}\pm\ket{1})$.
After applying $C_{1,3;2}$ in step (iii), we obtain

\begin{equation}
\frac{1}{\sqrt{2}}\left(H_1 H_3\ket{\Psi^{(j)}}_{1,3}\ket{0}_2+Z_1H_1 Z_3H_3\ket{\Psi^{(j)}}_{1,3}\ket{1}_2\right).
\end{equation}
After applying Hadamard gates and Pauli X operations in step (iv), we obtain
\begin{equation}
\frac{1}{\sqrt{2}}\left(X_1 X_3\ket{\Psi^{(j)}}_{1,3}\ket{+}_2+\ket{\Psi^{(j)}}_{1,3}\ket{-}_2\right)=\nonumber
\end{equation}
\begin{subnumcases}
{}
\ket{\Psi^{(j)}}_{1,3}\ket{0}_2 & ($j=0,2$) \\
-\ket{\Psi^{(j)}}_{1,3}\ket{1}_2 & ($j=1,3$).
\end{subnumcases}
After applying $C'_{1,3;2}$ in step (v), we obtain
\begin{subnumcases}
{}
e^{i(-y+z)}\ket{\Psi^{(0)}}_{1,3}\ket{0}_2 & ($j=0$) \\
ie^{i(y+z)}\ket{\Psi^{(1)}}_{1,3}\ket{1}_2 & ($j=1$)\\
e^{i(y-z)}\ket{\Psi^{(2)}}_{1,3}\ket{0}_2 & ($j=2$)\\
ie^{i(-y-z)}\ket{\Psi^{(3)}}_{1,3}\ket{1}_2& ($j=3$).
\end{subnumcases}
After applying a single qubit unitary operation $u(x)$ given by
\begin{equation}
u(x) = \frac{1}{\sqrt{2}} \left(\begin{array}{cc} e^{i x} & -i e^{- i x} \\e^{i x} & i e^{- i x}\end{array}\right)
\end{equation}
in step (vi), we obtain
\begin{subnumcases}
{}
\lambda_j\ket{\Psi^{(j)}}_{1,3}\ket{+}_2 & ($j=0,2$) \\
\lambda_j\ket{\Psi^{(j)}}_{1,3}\ket{-}_2 & ($j=1,3$).
\end{subnumcases}
After applying the projective measurement in the computational basis and conditional unitary operations in step (vii), we obtain 
\begin{equation}
 \lambda_j \ket{\Psi^{(j)}}_{1,3}
\end{equation}
for any measurement outcome. We can verify that the probability of obtaining a measurement outcome is $\frac{1}{2}$ irrespective of eigenvectors.

\begin{figure}
\begin{center}
  \includegraphics[height=.15\textheight]{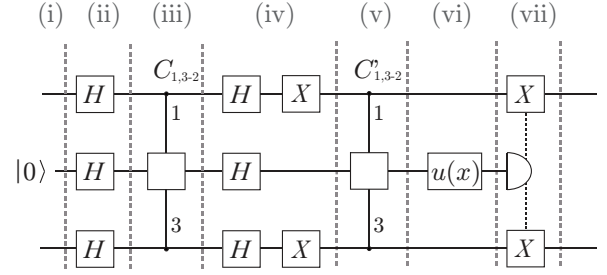}
  \end{center}
  \caption{A protocol to implement a two-qubit unitary operation $U_{global} (x,y,z)$ over the butterfly network. We consider 7 steps  presented in the quantum circuit and denote the steps by Roman numerals, (i) to (vii).  The symbols of gates of the circuit are same as the ones given for Fig.~\ref{fig:butterflycircuit}.}

\label{fig:butterflycircuit1}
\end{figure}

\section{Analysis of a bipartite property of four qubit states}

We prove that there is no pure state of four qubits $\ket{\Phi}_{1,2,3,4}$ satisfying
\begin{eqnarray}
\label{eq:sch1}\textsc{Sch\#}_{1,2}^{3,4}(\ket{\Phi})&=&4,\\
\label{eq:sch2}\textsc{Sch\#}_{2,4}^{1,3}(\ket{\Phi})&=&2,\\
\label{eq:sch3}\textsc{Sch\#}_{2,3}^{1,4}(\ket{\Phi})&=&2.
\end{eqnarray}
In \cite{4qubit}, it is shown that any pure states of four qubits can, up to permutations of the qubits, be transformed into one of the following nine families of states by determinant 1 SLOCC:
\begin{eqnarray}
\ket{\Phi_1}&=&\frac{a+d}{2}(\ket{0000}+\ket{1111})+\frac{a-d}{2}(\ket{0011}+\ket{1100})\nonumber\\
&&+\frac{b+c}{2}(\ket{0101}+\ket{1010})+\frac{b-c}{2}(\ket{0110}+\ket{1001})\nonumber\\
\ket{\Phi_2}&=&\frac{a+b}{2}(\ket{0000}+\ket{1111})+\frac{a-b}{2}(\ket{0011}+\ket{1100})\nonumber\\
&&+c(\ket{0101}+\ket{1010})+\ket{0110}\nonumber\\
\ket{\Phi_3}&=&a(\ket{0000}+\ket{1111})+b(\ket{0101}+\ket{1010})\nonumber\\
&&+\ket{0110}+\ket{0011}\nonumber\\
\ket{\Phi_4}&=&a(\ket{0000}+\ket{1111})+\frac{a+b}{2}(\ket{0101}+\ket{1010})\nonumber\\
&&+\frac{a-b}{2}(\ket{0110}+\ket{1001})\nonumber\\
&&+\frac{i}{\sqrt{2}}(\ket{0001}+\ket{0010}+\ket{0111}+\ket{1011})\nonumber\\
\ket{\Phi_5}&=&a(\ket{0000}+\ket{0101}+\ket{1010}+\ket{1111})\nonumber\\
&&+i\ket{0001}+\ket{0110}-i\ket{1011}\nonumber\\
\ket{\Phi_6}&=&a(\ket{0000}+\ket{1111})+\ket{0011}+\ket{0101}+\ket{0110}\nonumber\\
\ket{\Phi_7}&=&\ket{0000}+\ket{0101}+\ket{1000}+\ket{1110}\nonumber\\
\ket{\Phi_8}&=&\ket{0000}+\ket{1011}+\ket{1101}+\ket{1110}\nonumber\\
\ket{\Phi_9}&=&\ket{0000}+\ket{0111},\nonumber
\end{eqnarray}
where $a,b,c,d$ are complex parameters.

Since the Schmidt number of a state cannot be increased under SLOCC and determinant 1 SLOCC is invertible, the Schmidt number of a state is invariant under determinant 1 SLOCC. Thus, we show that no state of the  nine families simultaneously satisfies Eqs.~(\ref{eq:sch1})-(\ref{eq:sch3}).
There are three ways to divide four qubits into a pair of two qubits. We denote the set of Schmidt numbers of a four qubit state $\ket{\Phi}$ for all bipartite devisions as $\textsc{Sch\#}(\ket{\Phi})=\{\textsc{Sch\#}_{1,2}^{3,4}(\ket{\Phi}),\textsc{Sch\#}_{2,4}^{1,3}(\ket{\Phi}),\textsc{Sch\#}_{2,3}^{1,4}(\ket{\Phi})\}$.

\begin{theorem}
There is no four qubit state $\ket{\Phi}\in\mathcal{H}_1\otimes\mathcal{H}_2\otimes\mathcal{H}_3\otimes\mathcal{H}_4$ such that
\begin{equation}
\textsc{Sch\#}(\ket{\Phi})=\{4,2,2\}.
\label{eq:schcondition}
\end{equation}
\end{theorem}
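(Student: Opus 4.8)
The plan is to leverage the classification of four-qubit pure states recalled above. For a fixed bipartition the Schmidt rank is invariant under invertible (in particular determinant-$1$) SLOCC, and permuting the four qubits only permutes the three entries of the multiset $\textsc{Sch\#}(\ket\Phi)$. Hence it suffices to show that for each of the nine canonical families $\ket{\Phi_1},\dots,\ket{\Phi_9}$ the multiset $\textsc{Sch\#}$ is never $\{4,2,2\}$; and for a given family this reduces to computing the ranks of the three $4\times4$ matrices obtained by reshaping the coefficient tensor $T_{ijkl}$ according to the three ways of splitting $\{1,2,3,4\}$ into two pairs, since the Schmidt rank across a cut equals the rank of the corresponding flattening.

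I would dispose of the four non-parametrized families first, using the fact that each is essentially a spectator qubit tensored with (or superposed on) a three-qubit state: $\ket{\Phi_9}=\ket0_1\otimes(\ket{000}+\ket{111})_{234}$ is a spectator qubit times a GHZ state, whose Schmidt rank is $2$ across every cut, so $\textsc{Sch\#}(\ket{\Phi_9})=\{2,2,2\}$; $\ket{\Phi_7}$ and $\ket{\Phi_8}$ are a spectator qubit superposed with a W-type state on qubits $2,3,4$, and a direct rank count of their three flattenings gives $\{3,3,3\}$; and $\ket{\Phi_6}$ equals, up to its $a$-term, $\ket0_1$ tensored with the weight-two W-state on qubits $2,3,4$, giving $\{3,3,3\}$ for $a\neq0$ (each flattening is block diagonal with rank-$2$ blocks) and $\{2,2,2\}$ for $a=0$. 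None of these is $\{4,2,2\}$.

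For the parametrized families $\ket{\Phi_1},\dots,\ket{\Phi_5}$ I would write the three flattenings explicitly. The cleanest case is $\ket{\Phi_1}$: in the computational-basis ordering each flattening is block diagonal with two $2\times2$ blocks of the form $\bigl(\begin{smallmatrix}p&q\\q&p\end{smallmatrix}\bigr)$, where $(p,q)$ runs over $(\alpha,\beta),(\gamma,\delta)$ for the $\{1,2\}|\{3,4\}$ cut, over $(\alpha,\gamma),(\beta,\delta)$ for the $\{1,3\}|\{2,4\}$ cut, and over $(\alpha,\delta),(\beta,\gamma)$ for the $\{1,4\}|\{2,3\}$ cut, with $\alpha=\tfrac{a+d}2$, $\beta=\tfrac{a-d}2$, $\gamma=\tfrac{b+c}2$, $\delta=\tfrac{b-c}2$. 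Such a block is nonsingular iff $p^2\neq q^2$, so a cut has Schmidt rank $2$ only if both of its blocks are singular; but the singularity conditions of any two of the three cuts together force all of $\alpha^2,\beta^2,\gamma^2,\delta^2$ to coincide, which makes every block of the remaining cut singular as well. Thus one can never have two rank-$2$ cuts together with one rank-$4$ cut, so $\textsc{Sch\#}(\ket{\Phi_1})\neq\{4,2,2\}$. The families $\ket{\Phi_2},\ket{\Phi_3}$ submit to the same block bookkeeping with slightly less symmetric blocks, while $\ket{\Phi_4},\ket{\Phi_5}$, whose extra $i$- and $i/\sqrt2$-terms destroy the clean block-diagonal structure, need short direct rank computations of the three dense $4\times4$ flattenings on the few parameter loci where a flattening can drop rank.

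The main obstacle is precisely this last step: the parametrized families carry up to four complex parameters, so ``Schmidt rank $2$ on a given cut'' is itself a union of several positive-codimension parameter loci, and one must check over all of them (and their intersections) that the other two ranks cannot both equal $2$. Organizing everything around the block structure of the flattenings keeps this finite and essentially mechanical, since each rank-drop condition becomes an equality $p^2=q^2$ between explicit linear forms in $a,b,c,d$, and the spectator-qubit observations already short-circuit four of the nine families. The one place where genuine (if brief) case analysis seems unavoidable is $\ket{\Phi_4}$ and $\ket{\Phi_5}$, where the flattenings are no longer block diagonal.
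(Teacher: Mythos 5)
Your plan follows essentially the same route as the paper's Appendix E: invoke the nine-family classification of four-qubit states under determinant-one SLOCC, use the invariance of the three bipartite Schmidt ranks, and exclude $\{4,2,2\}$ family by family by computing the ranks of the three flattenings; your block analysis of $\ket{\Phi_1}$ reproduces exactly the paper's singular-value sets $n_1,n_1',n_1''$, and your spectator-qubit shortcuts for $\ket{\Phi_6}$--$\ket{\Phi_9}$ give the same multisets the paper tabulates. Two caveats. First, in the $\ket{\Phi_1}$ argument the claim that a cut has Schmidt rank $2$ \emph{only if} both of its $2\times2$ blocks are singular is not quite right: rank $2$ can also arise from one nonsingular block together with one zero block (e.g.\ $\gamma=\delta=0$ with $\alpha^2\neq\beta^2$). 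That case is easily dispatched---if one pair of parameters vanishes, the other two cuts have flattenings built from the same blocks $(\alpha,0),(\beta,0)$ and hence equal rank, so the multiset has the form $\{2,r,r\}$ and can never be $\{4,2,2\}$---but the patch needs to be stated, since as written the step is false. (A similar small slip: for $\ket{\Phi_6}$ with $a\neq0$ each flattening is one rank-$2$ block plus one rank-$1$ block, giving rank $3$; ``rank-$2$ blocks'' would give rank $4$.) Second, the parametrized families $\ket{\Phi_2}$--$\ket{\Phi_5}$ are where the paper spends most of its effort---for $\ket{\Phi_4}$ one must count roots of a cubic and derive the mutually incompatible conditions $a\neq0$, $b\neq0$, ($a=b$ or $3a+b=0$), ($a=-b$ or $3a=b$), and $\ket{\Phi_2}$ requires a three-way case analysis---and your proposal defers these as ``mechanical'' computations rather than carrying them out; so as it stands the argument is complete only for $\ket{\Phi_1}$ and $\ket{\Phi_6}$--$\ket{\Phi_9}$, with the remaining families correctly identified but not yet verified.
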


\begin{proof}
By calculating the Schmidt rank for all bipartite devisions, we can easily check that
\begin{eqnarray}
\textsc{Sch\#}(\ket{\Phi_6})&=&\{n_6,n_6,n_6\}\\
\textsc{Sch\#}(\ket{\Phi_7})&=&\{3,3,3\}\\
\textsc{Sch\#}(\ket{\Phi_8})&=&\{3,3,3\}\\
\textsc{Sch\#}(\ket{\Phi_9})&=&\{2,2,2\},
\end{eqnarray}
where $n_6=\#\left\{\sqrt{2},\frac{1}{2}\sqrt{1+4|a|^2}+\frac{1}{2},\frac{1}{2}\sqrt{1+4|a|^2}-\frac{1}{2}\right\}$ and $\#\mathcal{S}$ is the number of non-zero elements of set $\mathcal{S}$. Since $n_6=2$ or $n_6=3$, these four states do not satisfy Eq.~\eqref{eq:schcondition}.

An element of $\textsc{Sch\#}(\ket{\Phi_5})$ is $\#\left\{1,\sqrt{2},2|a|\right\}$. To satisfy Eq.~\eqref{eq:schcondition}, $a=0$ is required. Then
\begin{equation}
\textsc{Sch\#}(\ket{\Phi_5})=\{2,3,3\},
\end{equation}
which does not satisfy Eq.\eqref{eq:schcondition}.

An element of $\textsc{Sch\#}(\ket{\Phi_4})$ is $\#\{|b|\}+\#\{x|x^3-(3|a|^2+2)x^2+(3|a|^4+2|a|^2+1)x-|a|^6=0\}$.
To satisfy Eq.~\eqref{eq:schcondition}, the element must be 2 or 4.
If the element is 2, since $\#\{x|x^3-(3|a|^2+2)x^2+(3|a|^4+2|a|^2+1)x-|a|^6=0\}$ is larger than 1 and is 2 if and only if $a=0$, we have
\begin{equation}
a=b=0.
\end{equation}
Then $\textsc{Sch\#}(\ket{\Phi_4})=\{2,2,2\}$. Thus, the element must be 4.
Since $\#\{x|x^3-(3|a|^2+2)x^2+(3|a|^4+2|a|^2+1)x-|a|^6=0\}$ is 3 if and only if $a\neq 0$,  we have
\begin{equation}
a\neq0,\,\,b\neq0.
\label{eq:schcon1}
\end{equation}
Another element of $\textsc{Sch\#}(\ket{\Phi_4})$ is $\#\{|a-b|\}+\#\{x|64x^3+(\cdots)x^2+(\cdots)x-|a-b|^4|3a+b|^2=0\}$, where we abbreviate coefficients of $x^2$ and $x$. Since this element must be 2, it is necessary that
\begin{equation}
a-b=0\,\,or\,\,3a+b=0.
\label{eq:schcon2}
\end{equation}
The other element of $\textsc{Sch\#}(\ket{\Phi_4})$ is $\#\{|a+b|\}+\#\{x|64x^3+(\cdots)x^2+(\cdots)x-|a+b|^4|3a-b|^2=0\}$, where we abbreviate coefficients of $x^2$ and $x$. Since this element must be 2, it is necessary that
\begin{equation}
a+b=0\,\,or\,\,3a-b=0.
\label{eq:schcon3}
\end{equation}
We can easily check that it is impossible to simultaneously satisfy Eqs.~(\ref{eq:schcon1})-(\ref{eq:schcon3}).

$\textsc{Sch\#}(\ket{\Phi_3})$ is $\{n_3,n_3',n_3'\}$, where
\begin{eqnarray}
n_3&=&\#\{\sqrt{2},|a+b|,|a-b|\},\\
n_3'&=&\#\{\sqrt{1+4|a|^2}\pm1,\sqrt{1+4|b|^2}\pm1\}.
\end{eqnarray}
To satisfy Eq.~\eqref{eq:schcondition}, $n_3'$ must be $2$, that is $a=b=0$. Then $n_3=1$, which does not satisfy Eq.~\eqref{eq:schcondition}.

$\textsc{Sch\#}(\ket{\Phi_2})$ is $\{n_2,n_2',n_2''\}$, where
\begin{eqnarray}
n_2&=&\#\{|a|,|b|,\sqrt{1+4|c|^2}\pm1\},\\
n_2'&=&\#\{|a+b\pm2c|,\sqrt{1+|a-b|^2}\pm1\},\\
n_2''&=&\#\{|a-b\pm2c|,\sqrt{1+|a+b|^2}\pm1\}.
\end{eqnarray}
In the following, we verify that $\{n_2,n_2',n_2''\}$ cannot be $\{4,2,2\}$, $\{2,4,2\}$ or $\{2,2,4\}$.
\begin{enumerate}
\item $\{n_2,n_2',n_2''\}\neq\{4,2,2\}$:

If $n_2=4$, it is necessary that
\begin{equation}
a\neq0,\,\,b\neq0,\,\,c\neq0.
\label{eq:schcon4}
\end{equation}
If $n_2'=2$, it is necessary that
\begin{eqnarray}
a-b=a+b+2c=0,\\
a-b=a+b-2c=0,\\
or\,\,a+b-2c=a+b+2c=0.
\end{eqnarray}
If $n_2''=2$, it is necessary that
\begin{eqnarray}
a+b=a-b+2c=0,\\
a+b=a-b-2c=0,\\
or\,\,a-b-2c=a-b+2c=0.
\label{eq:schcon5}
\end{eqnarray}
We can easily check that it is impossible to simultaneously satisfy  Eqs.~(\ref{eq:schcon4})-(\ref{eq:schcon5}).

\item $\{n_2,n_2',n_2''\}\neq\{2,4,2\}$:

If $n_2=2$, it is necessary that
\begin{eqnarray}
a=b=0,\\
a=c=0,\\
or\,\,b=c=0.
\end{eqnarray}
With the necessary condition for $n_2''=2$, we obtain that
\begin{equation}
a=b=c=0.
\end{equation}
Then, it is impossible to satisfy $n_2'=4$.

\item $\{n_2,n_2',n_2''\}\neq\{2,2,4\}$:

If $n_2=2$, it is necessary that
\begin{eqnarray}
a=b=0,\\
a=c=0,\\
or\,\,b=c=0.
\end{eqnarray}
With the necessary condition for $n_2'=2$, we obtain that
\begin{equation}
a=b=c=0.
\end{equation}
Then, it is impossible to satisfy $n_2''=4$.
\end{enumerate}

Finally, we analyze $\textsc{Sch\#}(\ket{\Phi_1})$. $\textsc{Sch\#}(\ket{\Phi_1})$ is $\{n_1,n_1',n_1''\}$, where
\begin{eqnarray}
n_1&=&\#\{|a|,|b|,|c|,|d|\},\\
n_1'&=&\#\{|a+b-c-d|,|a-b+c-d|,\nonumber\\
&&|-a+b+c-d|,|a+b+c+d|\},\\
n_1''&=&\#\{|-a+b+c+d|,|a-b+c+d|,\nonumber\\
&&|a+b-c+d|,|a+b+c-d|\}.
\end{eqnarray}
Note that $n_1$, $n_1'$ and $n_1''$ are invariant under permutation of $a$, $b$, $c$ and $d$.
We verify that $\{n_1,n_1',n_1''\}$ cannot be $\{4,2,2\}$, $\{2,4,2\}$ or $\{2,2,4\}$ in the following.
\begin{enumerate}
\item $\{n_1,n_1',n_1''\}\neq\{4,2,2\}$:

If $n_1=4$, it is necessary that
\begin{equation}
a\neq0,\,\,b\neq0,\,\,c\neq0,\,\,d\neq0.
\end{equation}
If $n_1'=2$, it is necessary that in general
\begin{eqnarray}
a+b-c-d=0,\,\,a-b+c-d=0\\
\Leftrightarrow a=d,\,\,b=c.
\end{eqnarray}
Then
\begin{equation}
n_1''=\#\{|2b|, |2a|, |2a|, |2b|\}=4.
\end{equation}

\item $\{n_1,n_1',n_1''\}\neq\{2,4,2\}$ and $\{n_1,n_1',n_1''\}\neq\{2,2,4\}$: 

If $n_1=2$, it is necessary that in general
\begin{equation}
a=0,\,\,b=0,\,\,c\neq0,\,\,d\neq0.
\end{equation}
Then
\begin{equation}
n_1'=n_1''=\#\{|c+d|,|c+d|,|c-d|,|c-d|\}.
\end{equation}

\end{enumerate}

\end{proof}

\section{A cluster network with loops}

A cluster network with loops is defined as follows. 
\begin{definition}
A  network $G=\{\mathcal{V},\mathcal{E},\mathcal{I},\mathcal{O} \}$ is a generalized cluster network if and only if for some $k\geq 1$ and $N\geq 1$, 
\begin{eqnarray}
\mathcal{V}&=&\{v_{i,j};\,1\leq i\leq k,1\leq j\leq N\}\nonumber\\
\mathcal{I}&=&\{v_{i,1};\,1\leq i\leq k\}\nonumber\\
\mathcal{O}&=&\{v_{i,N};\,1\leq i\leq k\}\nonumber\\
\mathcal{E}&=&\mathcal{S}_{sub} \cup \mathcal{K}
\end{eqnarray}
where
\begin{eqnarray}
\mathcal{S}_{sub}&\subseteq&\mathcal{S}_{comp}, \nonumber\\
\mathcal{S}_{comp}&=& \{(v_{m,j},v_{n,j});\,1\leq m<n\leq k,1\leq j\leq N)\}, \nonumber\\
\mathcal{K}&=& \{(v_{i,j},v_{i,j+1});\,1\leq i\leq k,1\leq j\leq N-1)\}.\nonumber\\
\end{eqnarray}
\end{definition}

For this network, if there exists a loop of vertical edges $\mathcal{L}\subseteq \mathcal{S}_{sub}$ such that for some $j$, $L$ and $\{i_m\}_{m=1}^L$,
\begin{eqnarray}
\mathcal{L}&=&\{e_1=(v_{{i_1},j},v_{{i_2},j}),e_2=(v_{{i_2},j},v_{{i_3},j}),\nonumber\\
&&\cdots,e_L=(v_{{i_L},j},v_{{i_1},j})|e_m\neq e_n \,\,{\rm if}\,\, m\neq n\},
\end{eqnarray} 
it allows to perform a cyclic permutation that transmits a qubit state from $v_{i_1,j}$ to $v_{i_2,j}$, from $v_{i_2,j}$ to $v_{i_3,j}$ and so on (by consuming Bell pairs corresponding to the looped vertical edges for teleportation), in addition to performing controlled unitary operations presented in Section IV.    Thus, quantum computation over a cluster network with loops of vertical edges may have more capability than that without a loop.   Note that a condition for the implementable unitary operations over this type of cluster networks with loops are still restricted by Theorem 2 and 4.   An extension of our results for more general networks is an open problem.

\begin{IEEEbiographynophoto}{Seiseki Akibue}
received B.S., M.S. and Ph.D. at the University of Tokyo, Japan in 2011, 2013 and 2016, respectively. Since 2016, he has worked at NTT Communication Science Laboratories, NTT Corporation, Japan. His research interests include foundations of quantum mechanics, distributed quantum computation and quantum computational complexity.
\end{IEEEbiographynophoto}

\begin{IEEEbiographynophoto}{Mio Murao}
received M.S. and Ph.D. at Ochanomizu University in Tokyo, Japan in 1993 and 1996, respectively.  She worked as a postdoctoral fellow at Harvard University (US), Imperial College, London (UK) and RIKEN (Japan).  She was appointed as Associate Professor in 2001 and Professor in 2015 in the Department of Physics, the School of Science, the University of Tokyo. Her research interests cover a wide range of theoretical topics in quantum information and quantum physics.  She currently focuses on investigating entanglement and other non-local properties of quantum mechanics and their applications for distributed quantum information processing.
\end{IEEEbiographynophoto}

\end{document}